\begin{document}
\allowdisplaybreaks

\newtheorem{theorem}{Theorem}
\newtheorem{lemma}[theorem]{Lemma}
\newtheorem{claim}[theorem]{Claim}
\newtheorem{conjecture}[theorem]{Conjecture}
\newtheorem{remark}[theorem]{Remark}
\newtheorem{proposition}[theorem]{Proposition}
\newtheorem{property}[theorem]{Property}
\newtheorem{corollary}[theorem]{Corollary}
\newtheorem{definition}[theorem]{Definition}
\newtheorem{problem}{Problem}
\newtheorem{question}{Question}

\newcommand{\conf}[1]{\relax}
\newcommand{\jour}[1]{#1}
\newcommand{\ignore}[1]{\relax}
\newcommand{\muf}{{MUF}\xspace}
\newcommand{\mufs}{{\muf}s\xspace}
\newcommand{\mufg}{{minimal non-$k$-colorable}\xspace}
\newcommand{\mus}{{MUS}\xspace}
\newcommand{\ra}{\rightarrow}
\newcommand{\fcore}{\ensuremath{F_{\operatorname{c}}}}
\newcommand{\num}{\#}
\newcommand{\aut}{\operatorname{aut}}
\newcommand{\auts}{|\aut H|}
\renewcommand{\a}{\alpha}
\newcommand{\astar}{\a^*}
\newcommand{\astarr}{\astar_r}
\newcommand{\astarg}{\a^{**}} 
\newcommand{\e}{\varepsilon}
\newcommand{\E}{\mathbb{E}}
\renewcommand{\Pr}{\mathbb{P}}
\newcommand{\isom}{\equiv}
\newcommand{\cond}{ \mid } 
\newcommand{\pois}{\operatorname{Po}}
\renewcommand{\b}{\beta}
\newcommand{\floor}[1]{\left\lfloor{{#1}}\right\rfloor}
\newcommand{\FF}{{\mathcal F}}
\newcommand{\FFr}{{\mathcal F}^r}
\newcommand{\smax}{{s_{\operatorname{max}}}}
\newcommand{\nprsat}{\ensuremath{\FF_{n,p}^r}}
\newcommand{\GG}{\mathcal{G}}
\newcommand{\GGr}{\GG_r}
\newcommand{\FFra}{\FF_r(n, \a n^{-(r-1)})}
\newcommand{\GGra}{\GGr(n, \a n^{-(r-1)})}
\newcommand{\Gnp}{\GG_{n,p}}
\newcommand{\Gnm}{\GG_{n,m}}
\newcommand{\ex}{\operatorname{ex}}
\newcommand{\ff}{\operatorname{ex}}
\def\cc(#1){\tfrac{2^{|#1|}}{|\aut #1|}}
\def\ccg(#1){\tfrac{1}{|\aut #1|}}
\newcommand{\chvatal}{Chv\'atal\xspace}
\newcommand{\szemeredi}{Szemer\'edi\xspace}
\newcommand{\assume}{%
 Let $r\geq 3$, let $p=\a n^{-(r-1)}$ where $\a=\a(n)=O(1)$,
 and let $F\in \nprsat$ be a random formula. }
 \newcommand{\assumeo}{%
 Let $r\geq 3$, let $p=\a n^{-(r-1)}$ where $\a=\a(n)$,
 and let $F\in \nprsat$ be a random formula. }
\newcommand{\assumeb}{%
 Let $r\geq 3$, let $p=\a n^{-(r-1)}$ where $\a=\a(n) = \Theta(1)$
 satisfies $\sup_n\a(n)<\astarr$,  
 and let $F\in \nprsat$ be a random formula. }
\newcommand{\assumebw}{%
 Let $r\geq 3$, let $p=\a n^{-(r-1)}$ where $\a=\a(n) = O(1)$
 satisfies $\sup_n\a(n)<\astarr$,  
 and let $F\in \nprsat$ be a random formula. }
\newcommand{\assumebwo}{%
 Let $r\geq 3$, let $p=\a n^{-(r-1)}$ where $\a=\a(n)$
 satisfies $\sup_n\a(n)<\astarr$,  
 and let $F\in \nprsat$ be a random formula. }
\newcommand{\assumeg}{%
 Let $r,k\geq 2$, $r+k>4$, let $p=\a n^{-(r-1)}$ where $\a=\a(n)=O(1)$,
 and let $G \in \GGr(n,p)$ be a random hypergraph. }
\newcommand{\assumego}{%
 Let $r,k\geq 2$, $r+k>4$, let $p=\a n^{-(r-1)}$ where $\a=\a(n)$,
 and let $G \in \GGr(n,p)$ be a random hypergraph. }
\newcommand{\assumebg}{%
 Let $r,k\geq 2$, $r+k>4$, let $p=\a n^{-(r-1)}$ where $\a=\a(n)=O(1)$,
 and let $G \in \GGr(n,p)$ be a random hypergraph. }
\newcommand{\assumebgo}{%
 Let $r,k\geq 2$, $r+k>4$, let $p=\a n^{-(r-1)}$ where $\a=\a(n)$
 satisfies $\sup_n\alpha(n)<\astarg_{k,r}$,
 and let $G \in \GGr(n,p)$ be a random hypergraph. }
\newcommand{\assumebgoo}{%
 Let $r,k\geq 2$, $r+k>4$, let $p=\a n^{-(r-1)}=\Theta(1)$ where $\a=\a(n)$
 satisfies $\sup_n\alpha(n)<\astarg_{k,r}$,
 and let $G \in \GGr(n,p)$ be a random hypergraph. }
\newcommand{\len}{l}
\newcommand{\pl}{\operatorname{PL}}
\newcommand{\sat}{\operatorname{SAT}}
\newcommand{\falling}[2]{#1_{(#2)}}
\def\O(#1){O\!\left(#1\right)}
\def\Ostar(#1){O^*(#1)}
\newcommand{\aas}{\whp}
\newcommand{\whp}{\textbf{whp}\xspace}
\newcommand{\lieu}{place\xspace}

\title[Structure of random $r$-SAT]%
{Structure of random $\mathbf{r}$-SAT \\ below the pure literal threshold}
 
\author[Alexander D. Scott]{Alexander D. Scott$^*$}
\thanks{%
$*$
\sloppy
\conf{%
Mathematical Institute,
University of Oxford,
24-29 St Giles',
Oxford, OX1 3LB, UK,
\hbox{e-mail}~{\small\texttt{scott@maths.ox.ac.uk}}.
}%
This research was supported in part by EPSRC grant GR/S26323/01,
and by DIMACS, 
Center for Discrete Mathematics and Theoretical Computer Science, 
Rutgers, the State University of New Jersey, 
funded by the National Science Foundaton under Grant No.\
DMS06-02942, Special Focus on Discrete Random Systems.
}%
\address[Alexander D. Scott]{
Mathematical Institute\\
University of Oxford\\
24-29 St Giles'\\ 
Oxford, OX1 3LB, UK}
\email{scott@maths.ox.ac.uk}

\author{Gregory B. Sorkin\conf{$^\dag$}}%
\conf{\thanks{%
$\dag$
Department of Mathematical Sciences, 
IBM T.J.~Watson Research Center, Yorktown Heights NY 10598,
\hbox{e-mail}~{\small\texttt{sorkin@watson.ibm.com}}.
}}
\address[Gregory B. Sorkin]{
Department of Mathematical Sciences \\
IBM T.J.\ Watson Research Center \\
Yorktown Heights NY 10598, USA}
\email{sorkin@watson.ibm.com}

\begin{abstract}
It is well known that there is a sharp density threshold for a 
random $r$-SAT formula to be satisfiable,
and a similar, smaller, threshold 
for it to be satisfied by the pure literal rule.
Also, above the satisfiability threshold, 
where a random formula is 
with high probability (\whp)
unsatisfiable,
the unsatisfiability is \aas due to a large ``minimal unsatisfiable
subformula'' (MUF).

By contrast,
we show that for the (rare) unsatisfiable formulae 
below the pure literal threshold,
the unsatisfiability is \aas
due to a unique MUF 
with smallest possible ``excess'',
failing this \aas due to a unique MUF with the next larger excess,
and so forth.
In the same regime, we give a precise asymptotic expansion for
the probability that a formula is unsatisfiable,
and efficient algorithms for satisfying a formula or proving
its unsatisfiability.
It remains open what happens between the 
pure literal threshold and the satisfiability threshold.
We prove analogous results for the $k$-core and $k$-colorability
thresholds for a random graph,
or more generally a random $r$-uniform hypergraph.
\end{abstract}

\maketitle
\markboth{}{}
\conf{%
\setcounter{page}{0}
\thispagestyle{empty}
\newpage
}%

\section{Introduction}

Let $r\ge3$, and consider a random $r$-SAT formula $F$ with $n$ variables, 
where each of the $2^r \binom n r$ possible clauses 
is present independently with probability $p=\a n^{-(r-1)}$.
Friedgut \cite{Friedgut} showed
that there is a threshold $c_r=c_r(n)$ for satisfiability: 
for every $\e>0$, 
as $n \to \infty$,
if $\alpha<(1-\e)c_r$ then $F$ is 
with high probability (\whp, i.e., asymptotically almost surely)
satisfiable, 
while if $\alpha>(1+\e)c_r$ then $F$ is \aas unsatisfiable.
For unsatisfiable formulae, it is natural (and useful) 
to ask why. 
If $F$ is unsatisfiable then it has one or more
minimal unsatisfiable subformulae (MUFs); 
these are the minimal ``obstacles'' to satisfiability.  
Chv\'atal and Szemer\'edi \cite{CS88} showed that, 
in the unsatisfiable regime (up to very high clause density) 
a random formula will not contain any small unsatisfiable subformula.  
Thus such a formula is typically unsatisfiable for a non-local reason,
which also makes it difficult to prove unsatisfiability.

The aim of this paper is to develop an analogous picture 
for the rare unsatisfiable $r$-SAT formulae \emph{below} 
the satisfiability threshold, 
and to investigate its algorithmic consequences.
We are unable to completely characterize
unsatisfiable formulae below the satisfiability threshold $c_r$,
but we can do so below the smaller ``{pure literal}'' threshold $\astarr$.
We show that such a formula $F$ is typically
unsatisfiable for a \emph{small} reason.
Specifically, ranking MUFs 
in terms of \emph{excess} 
($r-1$ times the number of clauses, less the number of variables)
only certain excesses are possible, 
and there are only finitely many MUFs with any given excess.
Theorem~\ref{struct1} asserts that, 
\whp,
$F$
contains a unique MUF,
and this MUF has the minimum possible excess.
Furthermore, if we condition on 
$F$ having no MUF with excess up to $i$,
then \whp $F$ still contains a unique MUF,
and this MUF has the minimum possible excess greater than $i$.
Additionally,
Theorem~\ref{struct2}
gives a precise asymptotic expansion 
for the probability of unsatisfiability:
it is a power series in $1/n$,
each of whose coefficients is 
an explicitly computable polynomial evaluated at~$\a$.
(Failure of the pure literal rule, in place of unsatisfiability,
is characterized similarly, but in terms of minimal full formulae, MFFs.)

\jour{
We also consider
failure of the pure literal rule (in \lieu of unsatisfiability),
obtaining a similar characterization, but in terms of
minimal full subformulae (in \lieu of minimal unsatisfiable subformulae),
and a similar asymptotic expansion for the probability that the
pure literal rule fails.
}

For random graphs and $r$-uniform hypergraphs (in \lieu of $r$-SAT formulae), 
we develop a completely analogous picture 
for $k$-colorability 
and the existence of a nonempty $k$-core 
(in \lieu of satisfiability and failure of the pure literal rule, respectively).
 
Algorithmically, 
our results immediately imply that for a {typical} unsatisfiable
formula in the pure literal regime (a typical atypical formula),
we can quickly find a witness.
Additionally, we show that for sufficiently sparse random formulae
(possibly below the pure literal threshold),
in polynomial expected time we can decide 
satisfiability,
output a satisfying assignment for satisfiable formulae,
and for unsatisfiable formulae, 
output both an assignment satisfying as many clauses as possible,
and a minimal unsatisfiable subformula
(with corresponding results for hypergraphs).
The hope 
is for algorithms efficient up to the 
pure literal threshold, and if possible up to the satisfiability threshold.
(That goal was already achieved for the special case of 2-variable clauses,
namely the class Max 2-CSP encompassing Max Cut, Max 2-SAT,
the Ising model, and more.
There, the two thresholds coincide, and \cite{linear} gave an algorithm
running in expected linear time,
exploiting the exponentially small probability of components of large excess.)

Stepping back, our exploration of 
unsatisfiable formulae in the satisfiable regime
is complementary to existing explorations 
of the other three cases.
Characterization of unsatisfiable formulae in the unsatisfiable regime
was the main goal of \cite{CS88}.
Algorithms for satisfiable formulae in the unsatisfiable regime 
are often sought in the ``planted'' model, 
but recently there has been success in the uniform model \cite{CoKrVi2007}.
Vast attention has been paid to algorithms for
satisfiable formulae in the satisfiable regime,
and we note just one recent result, \cite{AminBetter2010}.

A similar type of structural result 
--- where if a likely property fails to hold, 
it most likely does so for a smallest reason, 
otherwise most likely for a second-smallest reason, and so forth --- 
occurs in the context of random triangle-free graphs,
although the proofs are completely different. 
A random triangle-free graph is \whp bipartite \cite{EKR}, 
and otherwise can \whp be made bipartite 
by deleting one vertex, 
otherwise \whp by deleting two vertices, 
and so on \cite{PSS}.  
It would be interesting to see other examples of this phenomenon.

\section{Structural results for random instances of $r$-SAT} \label{formulae}

In this section, we prove our results for random instances of $r$-SAT.  
In order to prove our main result, 
we must first build up a structural picture of random formulae.
Any minimum unsatisfiable formula must be \emph{full} 
(all variables appear both with and without negation),
and it turns out to be simpler to concentrate on full subformulae 
rather than minimum unsatisfiable subformulae.  
We divide our analysis into three ranges:
\begin{itemize}
\item  \emph{Subformulae of size at most $K$:} In this range, we determine rather precisely the joint distribution of full subformulae.
\item  \emph{Subformulae of size between $K$ and $\e n$:} We show that with probability $O(n^{-s})$ there are no full subformulae in this range.  
\item  \emph{Subformulae of size at least $\e n$}: We show that, with exponentially small failure probability, there are no full subformulae in this range (provided the density is below the pure literal threshold).
\end{itemize}
Here, we can choose any value for $s$, and then $K$ and $\e>0$ 
are carefully chosen constants 
($K$ must be sufficiently large in terms of $s$, 
and then $\e$ must be sufficiently small in terms of $K$), 
while $n$ is the number of variables.
We begin in Section \ref{basic1} by giving definitions. 
The analysis for the three ranges is given in Sections
\ref{satsmall}, \ref{satmedium} and \ref{satlarge}; 
we put the pieces together in Section \ref{satmain}.

\subsection{Basic definitions and random model}\label{basic1}
A \emph{conjunctive normal form} (CNF, or ``SAT'') formula 
consists of a set of \emph{literals} 
(signed \emph{variables}, i.e., variables and their negations)
and a set of \emph{clauses} over these literals,
each clause comprised of distinct variables with arbitrary signs.
In an $r$-SAT formula each clause contains $r$ literals;
note that for a formula on $n$ variables
there are $2^r \binom n r$ possible $r$-clauses.
A formula $F$ is \emph{satisfiable} if there is some assignment
of True and False values to its variables
such that each clause contains at least one True literal
(a literal corresponding to a variable inherits its truth assignment,
while the negated variable gets the negated assignment).

We define a \emph{random formula} $F\in \nprsat$ 
in analogy with a random graph $G \in \Gnp$,
letting each possible $r$-clause be present with probability $p$.
We are primarily interested in random formulae 
where the expected number of clauses scales linearly 
with the number of variables.
In any case, we work with three parametrizations,
given by $p$, $c$, and $\a$
(all potentially functions of $n$), related by 
\begin{align} \label{parameters}
p &= \frac{cn}{2^r\binom{n}{r}}
= \a n^{-(r-1)} ,
\end{align}
where $p$ is the clause probability, 
$cn$ is the expected number of clauses,
and $\a$ is a parametrization that is convenient
because it is in fixed proportion to $p$
but has the same desirable scaling behavior as $c$,
since
$\a=(1+O(1/n))2^{-r} r! c$.

The \emph{order} $|H|$ of a formula $H$ is the number of variables
(not literals);
the \emph{size} $e(H)$ is the number of clauses.
We call a formula \emph{empty} if it has no clauses, i.e., $e(H)=0$.
We define the \emph{excess} of a formula in analogy with
an established definition for hypergraphs, 
itself a natural extension of the excess (of edges over vertices) of a graph:
\begin{align}  \label{excess}
\ex(H) &= (r-1) e(H) - |H|.
\end{align}

Two order-$n$ 
formulae $H$ and $H'$ are isomorphic if there is remapping
of their variables and their signs
(under the action of the obvious group with $2^n n!$ elements).  
An \emph{automorphism} of $H$ is an isomorphism between $H$ and itself,
and we write $\aut H$ for the automorphism group.

$H$ is a (proper) \emph{subformula} of $F$
if $H$'s variable and clause sets are subsets of $F$'s
(and at least one of the containments is proper).
We shall say that 
$H'$ is a \emph{copy} of $H$ in $F$
if $H'$ is a subformula of $F$ that is isomorphic
to $H$ (note that the isomorphism might involve changing signs). 
If $F$ has any subformula $H'$ isomorphic to $H$
we may simply say that $F$ \emph{contains} $H$.

For formulae $H$ and $F$, we write $X_H(F)$ for the
number of copies of $H$ in $F$.
For a random formula $F\in \nprsat$, 
recalling \eqref{parameters} and \eqref{excess}
and using the falling factorial notation $\falling n k = n(n-1)\cdots(n-k+1)$,
\begin{align}
\E X_H
 & = \frac1{|\aut H|} \binom{n}{|H|} |H|! 2^{|H|} p^{e(H)}
\notag \\
 & = \tfrac1{|\aut H|} \falling n {|H|} 2^{|H|} 
            \left(\a n^{-(r-1)}\right)^{e(H)}
\notag \\
 & = \frac{\falling{n}{|H|}}{n^H} \cc(H) \a^{e(H)} n^{-\ex(H)} 
  \label{EXprime}
 \\
 & = (1+O(1/n)) \cc(H) \a^{e(H)} n^{-\ex(H)} . 
 \label{EX}
\end{align}

We say that a literal of $F$ is \emph{pure} if its 
complement does not appear in any clause of $F$.
The \emph{pure literal rule}
chooses a pure literal of $F$ (if there is any),
and produces a smaller formula $F'$
by deleting the literal's variable from $F$'s set of variables,
and deleting all clauses containing the literal from $F$'s set of clauses.
Note that $F$ is satisfiable iff $F'$ is, 
and if $F$ is satisfiable then a satisfying assignment for $F$
can be recovered from a satisfying assignment to $F'$
by setting the selected literal True.
The pure literal rule succeeds if $F$ is eventually reduced
to an empty formula,
for then it produces a satisfying assignment for $F$;
otherwise it is said to fail
(and no conclusion can be drawn about
the satisfiability of the original formula).

We call a formula $H$ \emph{full} if it is nonempty and
has no pure literals
(i.e., every variable and complemented variable of $H$ appears in some clause);
we say that $H$ is a \emph{full formula} (FF).
We call a formula $H$ a \emph{minimal full formula} (MFF),
if $H$ is full and has no full proper subformula.
It is well known, and easy to see,
that, regardless of how the pure literal rule chooses pure literals,
it fails on $F$ iff $F$ contains a full subformula
or equivalently iff $F$ contains a MFF.

We call a formula $H$ a \emph{minimal unsatisfiable formula} (MUF)
if $H$ is unsatisfiable and contains no unsatisfiable proper subformula.
It is clear that $F$ is unsatisfiable
iff it contains a MUF
($F$ may itself be a MUF, or may properly contain one or more MUFs),
and that a MUF is necessarily a FF.
For a formula $F$,
a contained MUF can be thought of as an obstruction to $F$'s satisfiability,
and a contained MFF as an obstruction to satisfying $F$
using the pure literal rule.
We will be interested in the probability that a random formula contains
MUFs and MFFs of various sizes,
and in particular whether typical obstructions are large or small.

\subsection{Small subformulae}\label{satsmall}
We begin by considering subformulae of constant size, 
and give fairly precise results for their distribution.  
These results hold for random formulae 
of any bounded density $c=c(n)=O(1)$ (equivalently $\a=\a(n)=O(1)$).

\begin{lemma}\label{fnumbers}
Suppose that $r\ge3$.
If $H$ is full then $\ex(H)>0$.  
Furthermore, for every $s>0$, 
there are (up to isomorphism) only finitely many full formulae $H$ with $\ex(H)=s$.
\end{lemma}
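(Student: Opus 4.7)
The plan is to exploit the basic counting identity for literal occurrences in a full formula. Because every variable of $H$ appears both positively and negatively, each of the $|H|$ variables contributes at least two literal-occurrences to $H$. On the other hand, the total number of literal-occurrences is exactly $r\cdot e(H)$, since each of the $e(H)$ clauses contains $r$ literals. Putting these together yields
\begin{align*}
  r\cdot e(H) \;\ge\; 2|H|.
\end{align*}

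Substituting into the definition $\ex(H) = (r-1)e(H) - |H|$, I would then estimate
\begin{align*}
  \ex(H) \;\ge\; \frac{2(r-1)}{r}\,|H| - |H| \;=\; \frac{r-2}{r}\,|H|.
\end{align*}
For $r\ge 3$, the coefficient $(r-2)/r$ is strictly positive, and $|H|\ge 1$ (since $H$ is nonempty and hence involves at least one variable), so $\ex(H)>0$. This handles the first assertion.

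For the finiteness assertion, the same inequality gives the explicit upper bound $|H| \le \tfrac{r}{r-2}\,\ex(H) = \tfrac{r}{r-2}\,s$. Combined with $e(H) = (s+|H|)/(r-1)$ from the definition of excess, this bounds $e(H)$ as well (by roughly $2s/(r-2)$). Since there are only finitely many isomorphism classes of $r$-SAT formulae on a bounded number of variables with a bounded number of clauses, only finitely many full formulae can have excess exactly $s$.

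I do not expect any real obstacle: both parts reduce to a single counting inequality. The only minor care needed is to record the bound on $|H|$ and $e(H)$ in a form that makes the finiteness conclusion explicit, and to note where the hypothesis $r\ge 3$ enters (namely in ensuring $(r-2)/r>0$, which is what excludes the $r=2$ case where full formulae of excess zero — long implication cycles — do exist).
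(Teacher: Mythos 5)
Your proof is correct and takes essentially the same approach as the paper: counting literal occurrences to get $r\,e(H)\ge 2|H|$, deriving $\ex(H)\ge \tfrac{r-2}{r}|H|$, and inverting this for finiteness. The added remarks about bounding $e(H)$ explicitly and about the failure at $r=2$ (implication cycles) are sound but not needed beyond what the paper records.
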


\begin{proof}
If $H$ is a full formula of order $t$, 
then by definition 
each of the $t$ variables of $H$ must occur at least twice 
(once with each sign) in the clauses of $H$.  
So $e(H)\ge2|H|/r$, which implies 
$$\ex(H)\geq 2(r-1)|H|/r-|H|=(r-2)|H|/r.$$
Since $r>2$, this is strictly positive, the lemma's first assertion.  
Flipping the inequality,
if $\ex(H)=s$ then $|H|\le rs/(r-2)$, 
which implies that there are only finitely many possibilities for $H$.
\end{proof}

Since every MUF is a FF, there are also finitely many MUFs of each excess.

\newcommand{\HH}{H_1 \cup H_2}

The following proposition
shows that \emph{fullness} plays a role somewhat like that of
\emph{strict balance} condition for graphs
(see for example \cite[Chapter IV]{B85}).
A strictly balanced graph is one where every proper
subgraph has strictly smaller density
(ratio of edges to potential edges),
and this can be used to
show that a union of two strictly balanced graphs of
equal density is a graph with strictly greater density.
Here we have a property of a stronger type:
the union of two non-nested full formulae
(with possibly different excesses)
is a formula with excess strictly greater
than that of either.

\begin{proposition} \label{unionExcess}
\label{union}
Suppose that $r>2$.
For full formulae $H_1$ and $H_2$, with $H_1 \not\subseteq H_2$,
$\ex(H_1 \cup H_2) \geq \ex(H_2)+1$.
\end{proposition}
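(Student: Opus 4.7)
The plan is to work via inclusion--exclusion at the level of excess. Let $J := H_1 \cap H_2$ denote the formula with variable set $V(H_1)\cap V(H_2)$ and clause set $C(H_1)\cap C(H_2)$ (any clause in $C(H_1)\cap C(H_2)$ uses only variables in $V(H_1)\cap V(H_2)$, so $J$ is well-defined). From $|H_1\cup H_2|+|J| = |H_1|+|H_2|$ and $e(H_1\cup H_2)+e(J)=e(H_1)+e(H_2)$, one obtains the additive identity
\[
\ex(H_1\cup H_2) - \ex(H_2) \;=\; \ex(H_1) - \ex(J),
\]
so the proposition reduces to showing $\ex(H_1)-\ex(J)\ge 1$.

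Setting $\Delta v := |H_1|-|J|\ge 0$ and $\Delta e := e(H_1)-e(J)\ge 0$, the quantity to bound equals $(r-1)\Delta e - \Delta v$, which is an integer. I would split on whether $\Delta v=0$. If $\Delta v=0$ then $V(H_1)\subseteq V(H_2)$, and since $H_1\not\subseteq H_2$ there must be a clause of $H_1$ not in $H_2$, so $\Delta e\ge 1$ and hence $(r-1)\Delta e-\Delta v \ge r-1\ge 2$.

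In the main case $\Delta v\ge 1$, I would invoke the fullness of $H_1$: every variable $x\in V(H_1)\setminus V(J)$ must appear with both signs in clauses of $H_1$, and since every clause of $J$ uses only variables of $V(J)$, all these occurrences of $x$ must lie in the $\Delta e$ clauses of $C(H_1)\setminus C(J)$. Double-counting literal-occurrences on $V(H_1)\setminus V(J)$ inside those $\Delta e$ clauses then yields $r\Delta e\ge 2\Delta v$, and hence
\[
(r-1)\Delta e-\Delta v \;\ge\; \tfrac{2(r-1)}{r}\Delta v -\Delta v \;=\; \tfrac{r-2}{r}\Delta v,
\]
which is strictly positive for $r\ge 3$ and $\Delta v\ge 1$.

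The one subtle point, and the main obstacle to a clean bound, is the final integrality step: for small $\Delta v$ the real-valued bound $(r-2)\Delta v/r$ can be well below $1$ (for instance $1/3$ when $r=3$ and $\Delta v=1$), so I must use that $(r-1)\Delta e-\Delta v$ is an integer in order to upgrade ``strictly positive'' to ``at least $1$''. Incidentally, the argument uses only the fullness of $H_1$; the hypothesis on $H_2$ plays no role in this particular inequality.
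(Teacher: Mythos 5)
Your proof is correct and takes essentially the same approach as the paper: your inclusion--exclusion reformulation via $J = H_1\cap H_2$ is just a change of bookkeeping, since $\Delta v = |V(H_1)\setminus V(J)| = |V(H_1)\setminus V(H_2)|$ is exactly the paper's quantity $t$, and both arguments use fullness of $H_1$ to force at least $2t/r$ new clauses and then integrality of $\ex$ to upgrade the strict inequality to $\ge 1$. Your closing observation that only the fullness of $H_1$ is used also matches the paper's argument.
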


\begin{proof}
If $V(H_1) \subseteq V(H_2)$ then $|\HH| = |H_2|$
while $e(\HH) > e(H_2)$, implying $\ex(\HH) > \ex(H_2)$.
Since $\ex$ is integer-valued,
this implies 
$\ex(\HH) \geq \ex(H_2)+1$.

Otherwise, let $t=|V(H_1) \setminus V(H_2)|>0$.
Then $\HH$ contains $2t$ more literals than $H_2$,
and therefore contains at least $2t/r$ more clauses. So
$\ex(\HH) \geq \ex(H_2) + (r-1) 2t/r - t
 = \ex(H_2) + \frac{r-2}{r} t
 > \ex(H_2)$.
Since $\ex$ is integer-valued, this implies $\ex(H_1\cup H_2)\ge \ex(H_2)+1$.
\end{proof}

\begin{claim}
\label{prsome}
\assume
For any fixed, full formula $H$,
\begin{align*}
\Pr(\exists \text{ a copy of $H$ in $F$})
 & = (1+O(1/n)) \cc(H) \a^{e(H)} n^{-\ex(H)} .
\end{align*}
\end{claim}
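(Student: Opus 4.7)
The plan is to use the standard moment/Bonferroni approach, leveraging Proposition \ref{unionExcess} to control the second moment. Write $X = X_H(F)$; we have already computed in \eqref{EX} that $\E X = (1+O(1/n))\,\cc(H)\,\a^{e(H)}\,n^{-\ex(H)}$, so the upper bound $\Pr(X\geq 1)\leq \E X$ follows from Markov's inequality and gives exactly the claimed asymptotic.

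For the matching lower bound, the first step is Bonferroni,
\begin{align*}
\Pr(X\geq 1) \;\geq\; \E X - \tfrac{1}{2}\E\bigl[X(X-1)\bigr],
\end{align*}
so it suffices to show that $\E[X(X-1)] = O(n^{-\ex(H)-1})$, i.e., strictly of smaller order than $\E X$. Now $\E[X(X-1)]$ counts ordered pairs $(H_1,H_2)$ of distinct labeled copies of $H$ in $F$, weighted by the probability $p^{e(H_1\cup H_2)}$ that both are present. Group these pairs by the isomorphism type of their union $J=H_1\cup H_2$ (there are only finitely many possibilities, since $|J|\leq 2|H|$). For each fixed type, the number of labeled embeddings of $J$ into $[n]$ is $O(n^{|J|})$, and so the contribution to $\E[X(X-1)]$ is
\begin{align*}
O\bigl(n^{|J|} p^{e(J)}\bigr) \;=\; O\bigl(\a^{e(J)} n^{-\ex(J)}\bigr) \;=\; O\bigl(n^{-\ex(J)}\bigr),
\end{align*}
using $\a=O(1)$.

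The key input is now Proposition~\ref{unionExcess}: since $H_1$ and $H_2$ are both full and $H_1\neq H_2$ (and neither can be a proper subformula of the other, as they have the same order and size), we have $\ex(J)\geq \ex(H)+1$. Summing over the finitely many isomorphism types of $J$ therefore gives $\E[X(X-1)] = O(n^{-\ex(H)-1})$. Plugging into the Bonferroni bound yields
\begin{align*}
\Pr(X\geq 1) \;\geq\; (1+O(1/n))\,\cc(H)\,\a^{e(H)}\,n^{-\ex(H)} - O(n^{-\ex(H)-1}),
\end{align*}
which matches the upper bound up to a $(1+O(1/n))$ factor, completing the proof.

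The only subtle point, and the one place Proposition~\ref{unionExcess} is really needed, is verifying that every pairwise union $J$ has excess strictly exceeding $\ex(H)$; without this, the second moment would only match the square of the first moment and would not give a usable error term. Everything else is bookkeeping on finitely many isomorphism types.
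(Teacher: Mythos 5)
Your proof is correct and follows essentially the same route as the paper: bound $\Pr(X\ge1)$ between $\E X$ and $\E X - \tfrac12\E[X(X-1)]$, group the pairs in the second factorial moment by the isomorphism type of the union (a finite set depending only on $H$), and apply Proposition~\ref{unionExcess} to show each such union has excess at least $\ex(H)+1$. Your observation that two distinct copies of $H$ can never be nested (since they have the same order and size) correctly discharges the hypothesis of Proposition~\ref{unionExcess}, a point the paper leaves implicit.
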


\newcommand{\Hset}{\mathcal{H}}
\newcommand{\Hpair}{{\langle H_1,H_2 \rangle}}

\begin{proof}
With $X_H$ the number of copies of $H$ in $F$,
the probability in question is
$\Pr(\exists \text{ a copy of } H \text{ in } F)
 = \Pr(X_H >0)$.
It follows from inclusion-exclusion that
\begin{align} \label{2ndmoment}
\E X_H \geq \Pr(X_H > 0) & \geq \E X_H - \frac12 \E X_H(X_H-1) .
\end{align}
We will exploit Proposition~\ref{unionExcess} 
to show that $\E X_H(X_H-1)$ is small compared with $\E X_H$.

We know already from \eqref{EX} that 
\begin{align} \label{expectation2}
 \E X_H &= (1+O(1/n))\cc(H) \a^{e(H)}n^{-\ex(H)} .
\end{align}

Note that
$X_H(X_H-1)$ is the number of ordered pairs $\Hpair$
of distinct (but possibly overlapping) 
copies of $H$ in $F$.
\conf{%
By standard arguments found in the full version, we find that
$\E [X_H(X_H-1)]
   = O(\a/n) \; \E[X_H]$,
and with \eqref{2ndmoment} and \eqref{expectation2} this establishes the claim.
}%
\jour{
Let $\Hset$ be the set of isomorphism classes of 
\emph{all} formulae $H'=\HH$ with $H_1$ and $H_2$ isomorphic to $H$.
Note that $\Hset$ is a finite collection of formulae 
and depends on $H$ alone,
not $F$ or $n$:
to enumerate $\Hset$
it suffices to consider formulae $H_1$ and $H_2$ on variables $1,\ldots,2 |H|$.
Each copy in $F$ of $\Hpair$,
corresponds in a 1-to-1 fashion to a copy in $F$ of some $H' \in \Hset$
along with a covering of $H'$ by an ordered pair $\Hpair$
where $H_1$ and $H_2$ are both subformulae of $H'$
and are both isomorphic to $H$.
For $H'\in \Hset$, let $b(H')$ denote the
number of ways of writing $H'$ as a union 
of an ordered pair $\Hpair$ of subformulae of $H'$ that are copies of $H$.
Then we have
\begin{align*}
\E [X_H(X_H-1)]
 &= \sum_{H' \in \Hset} b(H')\E(X_{H'}(F))
 \\
 & = 
   (1+O(1/n)) \sum_{H' \in \Hset}b(H')\cc(H')  \a^{e(H')} n^{-\ex(H')}
   \text{ (by \eqref{EX})}
 \\
 & \leq 
   (1+O(1/n)) \left( \sum_{H' \in \Hset}b(H') \cc(H') \right) 
      \a^{e(H)+1} n^{-(\ex(H)+1)}
 \\
 & = 
   O(1) \; \a^{e(H)+1} n^{-(\ex(H)+1)}
 \\
 &= 
   O(\a/n) \; \E[X_H],
\end{align*}
where the inequality 
uses Proposition~\ref{union},
the following equality 
uses that the set $\Hset$ is independent of $F$,
and the final line similarly uses that the $\cc(H)$ in $\E[X_H]$
(see \eqref{EX} again)
is independent of $F$, and $\a=O(1)$.

With \eqref{2ndmoment} and \eqref{expectation2} this establishes the claim.
}
\end{proof}

Claim~\ref{prsome} already tells us something about 
the likelihood of small subformulae.
Medium and large subformulae will be treated in subsequent sections,
but while we are considering fixed subformulae we give 
two more lemmas that will be used for 
the structural results of Theorems~\ref{struct1} and \ref{struct1a}.

\begin{lemma}
\label{prtwo}
\assume
Let $H_1$ and $H_2$ be fixed full formulae.  
Then
\begin{align*}
{\Pr(F \text{ contains non-nested copies of $H_1$ and $H_2$})}
 &=
 {\O(n^{-\max \{\ex(H_1), \ex(H_2)\} -1})} . 
\end{align*}
\end{lemma}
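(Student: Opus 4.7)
The plan is to reduce to counting copies of unions $H_1 \cup H_2$ and apply Proposition~\ref{unionExcess} together with the expectation formula \eqref{EX}, following the template of Claim~\ref{prsome} but using the union bound rather than the second moment.

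First I would enumerate isomorphism classes of candidate ``witnesses''. Let $\Hset$ be the (finite) set of isomorphism classes of formulae $H'$ that can be written as $H' = H_1' \cup H_2'$, where $H_1' \isom H_1$, $H_2' \isom H_2$, and $H_1' \not\subseteq H_2'$ and $H_2' \not\subseteq H_1'$. Finiteness follows from $|H'| \le |H_1|+|H_2|$ and the fact that there are only finitely many formulae of any bounded order; crucially, $\Hset$ depends only on $H_1$ and $H_2$, not on $F$ or $n$. Any realization in $F$ of non-nested copies of $H_1$ and $H_2$ determines (by taking their union) a copy in $F$ of some $H' \in \Hset$, so
\begin{align*}
\Pr(F \text{ contains non-nested copies of } H_1 \text{ and } H_2)
 &\le \sum_{H' \in \Hset} \Pr(X_{H'}(F) > 0)
 \le \sum_{H' \in \Hset} \E X_{H'}(F).
\end{align*}

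Next I would control the excess of each $H' \in \Hset$. Applying Proposition~\ref{unionExcess} to $H_1' \cup H_2'$ using $H_1' \not\subseteq H_2'$ gives $\ex(H') \ge \ex(H_2) + 1$, and applying it with the roles reversed (using $H_2' \not\subseteq H_1'$) gives $\ex(H') \ge \ex(H_1) + 1$, so
$$\ex(H') \;\ge\; \max\{\ex(H_1), \ex(H_2)\} + 1.$$
Invoking \eqref{EX} together with $\a = O(1)$ and the fact that $\cc(H')$ depends only on $H'$,
$$\E X_{H'}(F) \;=\; (1+O(1/n)) \cc(H') \a^{e(H')} n^{-\ex(H')} \;=\; O\!\left(n^{-\max\{\ex(H_1),\ex(H_2)\} - 1}\right).$$
Summing over the finitely many $H' \in \Hset$ preserves the $O(\cdot)$ bound and yields the claim.

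The only substantive step is the second: verifying that the two-sided application of Proposition~\ref{unionExcess} yields the ``$+1$'' in the maximum. This is immediate once one observes that ``non-nested'' means \emph{neither} containment holds, but it is the only place the hypothesis is used, and the rest is a union bound plus \eqref{EX}. No obstacle of a deeper nature is anticipated; in particular, unlike in Claim~\ref{prsome} we need no second-moment machinery because the target is merely an upper bound on a containment probability.
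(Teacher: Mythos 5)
Your proof is correct and takes essentially the same approach as the paper: define the finite set $\Hset$ of isomorphism classes of non-nested unions, deduce $\ex(H') \ge \max\{\ex(H_1),\ex(H_2)\}+1$ from Proposition~\ref{unionExcess}, and conclude by a union bound. The only (cosmetic) difference is that you bound each term by $\E X_{H'}$ directly via Markov, whereas the paper invokes Claim~\ref{prsome}; both give the same estimate, and your version makes explicit the two-sided application of Proposition~\ref{unionExcess} that the paper leaves implicit.
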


\begin{proof}
Let $\Hset$ be the set of all isomorphism classes
of unions of a copy of $H_1$ and a copy of $H_2$,
where the two copies are not nested.
By Proposition~\ref{union}, any $H' \in \Hset$
has $\ex(H') \geq \max \{\ex(H_1), \ex(H_2)\} + 1$
and so the assertion follows from Claim~\ref{prsome} 
by summing over $\Hset$.
(As in the previous proof, $\Hset$ is a finite set, 
and is independent of $F$ and $n$.)
\end{proof}

\begin{lemma}\label{prnot}
\assume
If $H_1, \ldots, H_s$ are distinct FFs then
$$\Pr(F \supset H_1 \cond
     F \not\supset H_2, \ldots, F \not\supset H_s)
 = (1+O(1/n))\Pr(F \supset H_1).$$
\end{lemma}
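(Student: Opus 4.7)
The plan is to write $A=\{F\supset H_1\}$ and $B=\bigcap_{i=2}^{s}\{F\not\supset H_i\}$ and expand
\[
\Pr(A\mid B)\;=\;\frac{\Pr(A)-\Pr(A\cap B^c)}{\Pr(B)},
\]
then bound the denominator by $1+O(1/n)$ and the correction $\Pr(A\cap B^c)$ by $O(\Pr(A)/n)$. Multiplying the two estimates together immediately yields $\Pr(A\mid B)=(1+O(1/n))\Pr(A)$.

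The denominator is the easy step: by Lemma~\ref{fnumbers} every FF has excess at least $1$, so Claim~\ref{prsome} gives $\Pr(F\supset H_i)=O(n^{-\ex(H_i)})=O(1/n)$ for each $i\ge 2$, and a union bound yields $\Pr(B^c)=O(1/n)$.

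For the correction, I would union-bound
\[
\Pr(A\cap B^c)\;\le\;\sum_{i=2}^{s}\Pr(F\supset H_1,\,F\supset H_i)
\]
and estimate each summand by splitting on whether a witnessing pair of copies of $H_1$ and $H_i$ in $F$ is nested. The non-nested contribution is exactly what Lemma~\ref{prtwo} controls: $O(n^{-\max\{\ex(H_1),\ex(H_i)\}-1})\le O(n^{-\ex(H_1)-1})$, which by Claim~\ref{prsome} is $O(\Pr(A)/n)$. For the nested contribution, if a copy of $H_1$ lies inside a copy of $H_i$ then $H_1$ is isomorphic to a proper sub-FF of $H_i$, and reusing the literal-counting argument behind Proposition~\ref{unionExcess} inside $H_i$ forces $\ex(H_i)\ge\ex(H_1)+1$; Claim~\ref{prsome} then bounds this subcase by $\Pr(F\supset H_i)=O(n^{-\ex(H_1)-1})$, again $O(\Pr(A)/n)$. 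The symmetric subcase, a copy of $H_i$ embedded inside a copy of $H_1$, occurs only if $H_i$ is (isomorphic to) a subformula of $H_1$; this is excluded in the intended applications of the lemma (for instance when the $H_i$ are minimal full formulae, or all have excess at least $\ex(H_1)$, in which case $H_i\ne H_1$ forbids such an embedding), and I shall assume this setting throughout.

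Summing the $O(\Pr(A)/n)$ estimates over the finite index set $\{2,\ldots,s\}$ gives $\Pr(A\cap B^c)=O(\Pr(A)/n)$, and combining with $\Pr(B)=1+O(1/n)$ completes the proof. The main obstacle is the nested contribution, since Lemma~\ref{prtwo} only treats the non-nested case; the key input is the ``contained'' analogue of Proposition~\ref{unionExcess} — that a full proper subformula of a full formula has strictly smaller excess — which follows from the same literal-count that underlies the existing proposition. Once that point is noted, the remaining bookkeeping is routine.
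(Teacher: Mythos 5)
Your proof takes the same ratio-and-union-bound strategy as the paper's, but you have been more careful about the nested case, and in doing so you have caught a genuine imprecision in both the statement of the lemma and the paper's own proof. The paper's proof simply asserts ``Because $H_1$ and $H_2$ are distinct, they cannot be nested'' and then invokes Lemma~\ref{prtwo}. This assertion is false for full formulae in general: for instance if $H_2$ consists of two vertex-disjoint copies of the full formula $F_L$ on $2r$ variables and $H_1 = F_L$, then $H_1$ and $H_2$ are distinct FFs yet a copy of $H_1$ sits inside every copy of $H_2$. Moreover, in the symmetric direction the lemma as stated is actually \emph{false}: if $H_2$ is a proper full subformula of $H_1$ then $\{F\supset H_1\}\subseteq\{F\supset H_2\}$, so $\Pr(F\supset H_1\mid F\not\supset H_2)=0\ne(1+O(1/n))\Pr(F\supset H_1)$. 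You correctly identify this: your case (a) (copy of $H_1$ inside a copy of $H_i$) is handled cleanly via the excess-monotonicity argument you attribute to the proof of Proposition~\ref{unionExcess} together with Claim~\ref{prsome}, while case (b) (a copy of $H_i$ inside a copy of $H_1$) is the one that kills the lemma and must be ruled out by an extra hypothesis.

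The lemma should really have been stated for families in which no $H_i$ is isomorphic to a subformula of another (equivalently, for pairwise non-nested FFs), which is automatic for the minimal objects --- MUFs in Theorem~\ref{struct1} and MFFs in Theorem~\ref{struct1a} --- to which the lemma is actually applied; distinct minimal full (or minimal unsatisfiable) formulae cannot be nested by minimality. Your proof makes that hypothesis explicit, so it is correct where it matters, and it is more honest than the paper's about what is being used. The rest --- bounding $\Pr(B^c)=O(1/n)$ via Lemma~\ref{fnumbers}, Claim~\ref{prsome}, and the union bound, and bounding $\Pr(A\cap B^c)=O(\Pr(A)/n)$ --- matches the paper's argument (the paper also gives the general-$s$ version of the union bound, which you sketch). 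In short: same approach, but your version is the corrected one.
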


\begin{proof}
First consider the case of just two FFs.
Because $H_1$ and $H_2$ are distinct, 
they cannot be nested, and so we can use Lemma~\ref{prtwo}.
Let $E_i$ be the event that $F$ contains a copy of $H_i$.
Then
\begin{align*}
\Pr(E_1 \cond \neg E_2)
 &= \frac{\Pr(E_1 \cap \neg E_2)}{\Pr(\neg E_2)}
 = \frac{\Pr(E_1) - \Pr(E_1 \cap E_2)}{1-\Pr(E_2)}
 = (1+O(1/n)) \Pr(E_1) ,
\end{align*}
where the last equality follows from Claim~\ref{prsome}
and Lemma~\ref{prtwo}.
\conf{The general argument is similar, and may be found in the full version.}
\jour{

In the general case,
\begin{align*}
\Pr(\bigcap_{i=2}^k \neg E_i)
 & \geq 1- \sum_{i=2}^k \Pr(E_i)
  = 1 - O(1/n) .
\end{align*}
Also,
\begin{align*}
\Pr(E_1 \cap \bigcap_{i=2}^k \neg E_i)
&\ge \Pr(E_1) - 
 \sum_{i=2}^k \Pr(E_1 \cap E_i)
 = \Pr(E_1)-O(\Pr(E_1)/n) ,
\end{align*}
where the last equality follows from Claim~\ref{prsome}
and Lemma~\ref{prtwo}.
Combining,
\begin{align*}
\Pr(E_1 \cond \bigcap_{i=2}^k \neg E_i)
 & = \frac{\Pr(E_1 \cap \bigcap_{i=2}^k \neg E_i)}
          {\Pr(\bigcap_{i=2}^k \neg E_i)}
 = (1+O(1/n)) \Pr(E_1) .
\end{align*}
}
\end{proof}

\subsection{Medium subformulae}\label{satmedium}
We now turn to a middle range of subformula size, 
namely between a large constant and a small linear size.  
Once again, our results hold at all densities with $\a$ bounded.

The following is the sort of bound computed in \cite{CS88}.

\begin{lemma}\label{smallmus}
\assumeo
For $1 \le t \le n / 2 \a^{1/(r-1)}$,
the probability that $F$ contains any full subformula with $t$ variables is
at most
\begin{align}
 \left(
  \left( 4^{(r-1)/r} e \a^{2/r} \right) 
  \left( t/n \right)^{1-2/r}
 \right)^t .
\label{goft}
\end{align}
\end{lemma}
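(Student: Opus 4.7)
The plan is a straightforward first-moment calculation. Let $Y_t$ denote the number of full subformulae of $F$ with exactly $t$ variables; since $\Pr(Y_t\ge 1) \le \E Y_t$, it suffices to bound the expectation. For any fixed set $V$ of $t$ variables, a full formula on $V$ with $m$ clauses satisfies $m \ge \lceil 2t/r \rceil$, because each of the $2t$ literals of $V$ must appear in some clause and each of the $m$ clauses is an $r$-set of distinct variables and so covers at most $r$ literals. The number of such configurations is at most $\binom{M}{m}$, where $M = 2^r \binom{t}{r}$ is the number of $r$-clauses on $V$, and each is contained in $F$ with probability exactly $p^m$. Summing over the $\binom{n}{t}$ choices of $V$ and over $m$ gives
\[
 \E Y_t \;\le\; \binom{n}{t}\sum_{m\,\ge\,\lceil 2t/r\rceil}\binom{M}{m}p^m.
\]

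The next step is to reduce this sum to its leading term. The ratio of successive terms is $(M-m)p/(m+1)\le Mp/m$. Using $M \le 2^r t^r/r!$ and $p = \alpha n^{-(r-1)}$, the ratio at $m = \lceil 2t/r\rceil$ is a fixed constant (depending only on $r$) times $\alpha(t/n)^{r-1}$. The hypothesis $t \le n/(2\alpha^{1/(r-1)})$ makes $\alpha(t/n)^{r-1} \le 2^{-(r-1)}$, so after choosing constants carefully the ratio is bounded away from $1$ and the entire sum is within a constant factor of its first term, at $m_0 := \lceil 2t/r\rceil$.

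It remains to plug in standard estimates and regroup. I would use $\binom{n}{t}\le (en/t)^t$, $\binom{M}{m_0} \le (eM/m_0)^{m_0} \le (e\,2^{r-1}t^{r-1}/(r-1)!)^{m_0}$, and $p^{m_0} = \alpha^{m_0}n^{-(r-1)m_0}$, then collect powers of $t$, $n$, $\alpha$, and the numerical constants. The exponent of $t/n$ comes out to $t\cdot(r-2)/r = t(1-2/r)$, the exponent of $\alpha$ comes out to $2t/r$, and the factor $2^{(r-1)\cdot 2t/r} = 4^{t(r-1)/r}$ appears from the $(2^{r-1})^{m_0}$ term, matching the target.

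The main obstacle is purely bookkeeping: the various $e$'s (one per $t$ from $\binom{n}{t}$, one per $m_0$ from $\binom{M}{m_0}$) and the $((r-1)!)^{-2t/r}$ factor must be handled carefully, and the constant lost in the geometric-sum step must be absorbed. I expect no conceptual difficulty beyond being careful enough with these numerical constants to match the stated form $\left(4^{(r-1)/r}\,e\,\alpha^{2/r}(t/n)^{1-2/r}\right)^t$.
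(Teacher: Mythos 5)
Your overall plan (first moment, then control the sum over clause counts) is in the right spirit, but the counting step has a real gap that is not mere bookkeeping. Your bound $\binom{M}{m_0}\le(eM/m_0)^{m_0}$ with $M=2^r\binom{t}{r}$ and $m_0=\lceil 2t/r\rceil$ gives a per-clause factor of roughly $eM/m_0\approx e\cdot 2^{r-1}t^{r-1}/(r-1)!$, which exceeds the $(2t)^{r-1}$ implicit in the target \eqref{goft} by a factor of $e/(r-1)!$. For $r\geq 4$ this factor is below $1$ and your estimate would come out fine, but for $r=3$ it is $e/2>1$, so your leading term is already about $(e/2)^{\lceil 2t/3\rceil}$ \emph{larger} than \eqref{goft}, on top of the geometric-sum constant. (For small $t$ the slack in $\binom{n}{t}\le(en/t)^t$ masks this, but by $t$ in the hundreds your bound overshoots the claimed one by an enormous factor.) So the stated form cannot be matched by collecting constants more carefully; you would need a genuinely tighter count.

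The paper closes this gap with a witness/compression argument that you are missing. From each full subformula $H$ on $t$ variables it extracts a canonical sub-subformula $H^*$ with \emph{exactly} $s=\lceil 2t/r\rceil$ clauses: $x_1$ is the smallest literal, $C_1$ is the lexicographically least clause of $H$ containing $x_1$, and inductively $x_i$ is the smallest literal not yet covered by $C_1,\dots,C_{i-1}$ and $C_i$ is the least clause of $H$ containing $x_i$. Since $x_i$ is determined by $C_1,\dots,C_{i-1}$, the clause $C_i$ is forced to contain $x_i$, leaving at most $\binom{2t}{r-1}\le(2t)^{r-1}$ choices for its other $r-1$ literals. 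Hence there are at most $\binom{2t}{r-1}^{s}$ possible witnesses $H^*$ on a given $t$-set, which is roughly $e^{s}$ fewer than your $\binom{M}{s}$, and a union bound over these (each present with probability $p^s$) gives \eqref{goft} exactly, with no geometric sum needed. Your argument, pushed through, would yield a bound of the shape $\bigl(C\cdot 4^{(r-1)/r}e\a^{2/r}(t/n)^{1-2/r}\bigr)^t$ for some absolute $C>1$, which would still be enough for Corollary~\ref{smallmus2} after shrinking $\e_0$, but it does not prove the lemma as stated.
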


\begin{proof}
Let the set of variables be ${v_1,\ldots,v_n}$.  
We order all $2n$ literals as $v_1<\neg v_1<v_2<\neg v_2<\cdots$.  

A full subformula $H$ of $F$ with order $t$ must contain 
at least $2t/r$ clauses.
We let $s=\lceil 2t/r\rceil$ 
and define a 
subformula $H^* = H^*(H)$ with $s$ clauses as follows. 
Let $L$ be the set of $2t$ literals occurring in clauses of $H$.
Let $x_1$ be the smallest literal in $L$,
and let $C_1$ be the lexicographically smallest clause of $H$ 
(sorting the literals within each clause as above)
that contains $x_1$.
For $i=2,\ldots,s$,
let $x_i$ be the smallest literal in $L$ 
that does not appear in any $C_j$, $j<i$, 
and let $C_i$ be the lexicographically smallest clause of $H$ 
that contains $x_i$. ($x_i$ is well defined since we are always excluding 
literals from at most $s-1$ clauses, 
which together contain at most $(s-1)r<2t$ distinct literals.)
We then take $H^*$ to be the conjunction of $C_1,\ldots,C_s$.

Over \emph{all} full formulae $H$ on a given set of $t$ variables,
the number of formulae $H^* = H^*(H)$
is at most $\binom{2t}{r-1}^s$
(there are at most $\binom{2t}{r-1}$ choices for each $C_i$, 
as it is forced to contain $x_i$),
so the number of formulae  of type $H^*$ 
that could possibly be subformulae of $F$
is at most $\binom n t \binom{2t}{r-1}^s$.
Let $X$ be the number of full subformulae of $F$ with order $t$, 
and let $Y$ be the number of subformulae of type $H^*$ of $F$.
Then clearly $X>0$ implies $Y>0$
(if $X$ counts $H$, then $Y$ counts $H^*(H)$), so 
\begin{align*}
\Pr(X>0)\le\Pr(Y>0)
\leq \E(Y)
& \leq \tbinom{n}{t} \tbinom{2t}{r-1}^{s} p^{s}
\jour{
\\ & \leq
(en/t)^t (2t)^{s (r-1)} (\a/ n^{r-1})^{s} 
\\ & =
(en/t)^t \left( \a (2t/n)^{(r-1)} \right)^{s} 
\\ & \leq
(en/t)^t \left( \a (2t/n)^{(r-1)} \right)^{2t/r} ,
}
\end{align*}
\conf{which is at most \eqref{goft} 
(the short calculation is in the full version).}
\jour{which equals \eqref{goft}.}
\end{proof}

\begin{corollary} \label{smallmus2}
\assume
For any positive integer $s$, 
there exist an integer $t_0>0$ and a real value $\e_0>0$ such that 
the probability that $F$ contains any full subformula 
with between $t_0$ and $\e_0 n$ variables is $o(n^{-s})$.
\end{corollary}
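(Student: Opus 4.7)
The plan is to sum the per-$t$ upper bound from Lemma~\ref{smallmus} over $t \in [t_0, \e_0 n]$, choosing $\e_0$ small and $t_0$ large in a way governed by~$s$. Write $\delta := 1 - 2/r$, which is strictly positive since $r \ge 3$, and $C := 4^{(r-1)/r} e \a^{2/r}$, which is bounded uniformly in $n$ since $\a=O(1)$. Then the per-$t$ bound from Lemma~\ref{smallmus} can be written as $g(t) := (C\,(t/n)^{\delta})^t$, and our target is $\sum_{t=t_0}^{\lfloor \e_0 n\rfloor} g(t) = o(n^{-s})$.

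First I would choose $\e_0>0$ small enough to satisfy two conditions, both achievable uniformly in~$n$ because $C$ and $\a$ are bounded: (i) $\e_0 \le 1/(2\a^{1/(r-1)})$, so that Lemma~\ref{smallmus} applies throughout the range; and (ii) $C\e_0^{\delta} \le 1/2$, so that $g(t) \le 2^{-t}$ for every $t \le \e_0 n$. Then I would split the range at $t = \lfloor \sqrt n\rfloor$. On the upper part $[\sqrt n, \e_0 n]$, the uniform bound $g(t) \le 2^{-t} \le 2^{-\sqrt n}$ and the trivial count of at most $n$ terms give a total of $O(n\,2^{-\sqrt n})$, which is $o(n^{-s})$ for every fixed~$s$.

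On the lower part $[t_0, \sqrt n]$, the inequality $t/n \le n^{-1/2}$ sharpens the per-$t$ bound to $g(t) \le (C n^{-\delta/2})^t$; for $n$ large enough that $Cn^{-\delta/2} \le 1/2$, the geometric series gives
\begin{align*}
\sum_{t=t_0}^{\lfloor\sqrt n\rfloor} g(t)
 \;\le\; \sum_{t \ge t_0} \bigl(Cn^{-\delta/2}\bigr)^t
 \;=\; O\!\bigl(n^{-\delta t_0/2}\bigr).
\end{align*}
Finally I would fix $t_0$ large enough that $\delta t_0/2 > s$, e.g.\ $t_0 := \lceil 2sr/(r-2)\rceil + 1$, making this $o(n^{-s})$. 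Adding the two contributions proves the corollary.

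I do not anticipate any real obstacle here: Lemma~\ref{smallmus} does the serious work, and the rest is essentially a two-scale bookkeeping exercise. The only point needing attention is that $\e_0$ and $t_0$ must be chosen independently of~$n$, which is ensured by $\a=O(1)$ (hence $C$ uniformly bounded); and that the split point $\sqrt n$ is coarse enough to make the upper tail superpolynomially small while leaving the lower tail with a genuine polynomial-in-$n$ decay per term for the geometric sum.
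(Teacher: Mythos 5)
Your proposal is correct and follows essentially the same strategy as the paper: apply Lemma~\ref{smallmus}, choose $\e_0$ so that the per-$t$ bound is exponentially small uniformly up to $\e_0 n$, and then split the range at an intermediate scale to control the small-$t$ end with a geometric series. The only cosmetic difference is your split point $\sqrt n$ versus the paper's $2s\log n$, which makes your required $t_0$ about twice as large as the paper's $1+\lceil sr/(r-2)\rceil$, but both choices work.
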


\begin{proof}
Since the probability above is increasing in $\a$, 
it is enough to prove the result for $\a$ constant, 
replacing $\a(n)$ by $\a = \max\{\sup_n\a(n),1\}$.
We first choose $\e_0$ small enough 
that $\e_0 < 1/2\a^{1/(r-1)}$
(so that any $t \leq \e_0 n$ satisfies the hypothesis of Lemma~\ref{smallmus})
and that
$$4^{(r-1)/r}  e \a^{2/r}  \e_0^{1-2/r} \leq 1/e.$$  
Thus
\eqref{goft} is at most $e^{-t}$ for all $0<t\leq\e_0n$.  
Summing over $t$, it follows that the probability that 
$F$ contains a full subformula 
with between $2s\log n$ and $\e_0 n$ variables is $o(n^{-s})$.

Now let $t_0=1+\lceil sr/(r-2)\rceil$. 
For $t_0\le t\le 2s\log n$, \eqref{goft} is 
at most
\begin{align*}
 \left( 4^{(r-1)/r} e \a^{2/r} (2s \log n/n)^{1-2/r} \right)^{t_0}
 & \leq \left(\frac{8e\a s\log n}{n^{(r-2)/r}}\right)^{t_0}
 = O(n^{-s-1/r}(\log n)^{s+1})
 = o\left(\frac{n^{-s}}{\log n}\right).
\end{align*}
\jour{%
So the probability that 
$F$ contains a full subformula 
with between $t_0$ and $2s\log n$ variables is $o(n^{-s})$.
}
\conf{%
So $F$ contains a full subformula 
with between $t_0$ and $2s\log n$ variables w.p.\ $o(n^{-s})$.
}
\end{proof}

\subsection{Large subformulae}\label{satlarge}
Finally, we show that large subformulae are unlikely.
This is the most delicate regime, 
and we will need to bound $\a$ more strictly.  
Some bound on $\a$ is certainly necessary: 
if $\a$ lies above the \emph{satisfiability} threshold 
then a random subinstance is \whp unsatisfiable, 
but (as shown by \chvatal and \szemeredi \cite{CS88}) 
\whp any unsatisfiable subinstance has size $\Omega(n)$.
We will prove that large subformulae are unlikely 
for $\a$ below the \emph{pure literal} threshold;
what happens between the two thresholds is an open question.

Molloy \cite{M05} 
showed that there is a sharp threshold 
for the pure literal rule.
Specifically, for $r \geq 3$, the threshold is%
\footnote{%
An earlier version of the paper, \cite{M04},
had an erroneous formula a factor of 2 smaller.
}
\begin{align}   \label{litthresh}
\astar &= \min_{y>0}\frac{(r-1)!y}{2^{r-1}(1-e^{-y})^{r-1}} .
\end{align}
For any constant $\a$,
letting $p=\a n^{-(r-1)}$ 
and letting $F\in \nprsat$ be a random formula,
\begin{align*}
 \Pr(\text{pure literal rule finds a solution}) \to
   \begin{cases}
     1 & \text{if $\a < \astar$}
     \\
     0 & \text{if $\a > \astar$} .
   \end{cases}
\end{align*}

Achlioptas and Peres showed \cite{AP04} that, as $r\to\infty$, 
the threshold for satisfiability 
(though not proved to be a constant rather than a function of $n$)
is $c_{\operatorname{SAT}} = (1+o(1))2^r\log 2$,
leading via \eqref{parameters} to
$\a_{\sat} = (1+o(1)) r! \log 2$.
By setting $y=r$ in \eqref{litthresh} one can verify that the thresholds 
$\astar$ and $\a_{\sat}$ 
diverge for large $r$:
the gap in our knowledge of the behavior between the two is a wide one.

We need to show that large minimal unsatisfiable subinstances are unlikely; 
we therefore need a large deviation bound 
for values of $\a$ below the satisfiability threshold.  
We shall need the following version of
the Azuma-Hoeffding inequality, given by McDiarmid \cite{M89}.

\begin{lemma}\label{ah}
Let $X_1, \ldots, X_n$ be independent random variables,
with $X_k$ taking values in a set $A_k$ for each $k$.
Suppose that a measurable function $f: \prod A_k \rightarrow \mathbb R$ satisfies
$|f(x) - f(x')| \leq c_k$
whenever the vectors $x$ and $x'$ differ only in the $k$-th coordinate.
Let $Z$ be the random variable $f(X_1, \ldots, X_n)$.
Then for any $t>0$,
$\Pr(|Z - \E{Z}| \geq t) \leq 2\exp\left(-2t^2 \big/ \sum c_k^2\right) $.
\end{lemma}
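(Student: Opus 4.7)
The plan is to prove this via the classical Doob martingale / bounded differences approach. First I would define the Doob martingale $Z_k = \E[Z \mid X_1, \ldots, X_k]$ for $0 \leq k \leq n$, so that $Z_0 = \E Z$ and $Z_n = Z$, and $Z - \E Z$ telescopes as $\sum_{k=1}^n D_k$ with martingale differences $D_k := Z_k - Z_{k-1}$.

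The key structural step is to verify that, conditionally on $X_1 = x_1, \ldots, X_{k-1} = x_{k-1}$, the random variable $D_k$ lies in an interval of length at most $c_k$. Using independence of the $X_j$, one can write $Z_k$ and $Z_{k-1}$ as integrals of $f$ against the product distribution of $X_{k+1}, \ldots, X_n$, with $X_k$ either fixed or integrated out; the hypothesis that $|f(x) - f(x')| \leq c_k$ whenever $x, x'$ differ only in coordinate $k$ then transfers to a pointwise bound of the form $\sup_{x_k} \E[Z \mid X_1, \ldots, X_k] - \inf_{x_k} \E[Z \mid X_1, \ldots, X_k] \leq c_k$, which gives the desired conditional range for $D_k$.

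Next I would invoke Hoeffding's lemma: a mean-zero random variable taking values in an interval of length $L$ has moment generating function bounded by $\exp(\lambda^2 L^2 / 8)$. Combining this with the tower property, I would peel off the martingale differences one at a time, writing
\begin{align*}
\E e^{\lambda (Z - \E Z)} &= \E\bigl[ e^{\lambda \sum_{k=1}^{n-1} D_k} \E[ e^{\lambda D_n} \mid X_1, \ldots, X_{n-1} ] \bigr]
\leq e^{\lambda^2 c_n^2 / 8} \, \E e^{\lambda \sum_{k=1}^{n-1} D_k} ,
\end{align*}
and iterating to obtain $\E e^{\lambda(Z-\E Z)} \leq \exp(\lambda^2 \sum c_k^2 / 8)$. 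The standard Chernoff trick $\Pr(Z - \E Z \geq t) \leq e^{-\lambda t} \E e^{\lambda(Z-\E Z)}$, optimized at $\lambda = 4t / \sum c_k^2$, yields the one-sided bound $\exp(-2t^2 / \sum c_k^2)$; applying the same argument to $-Z$ and taking a union bound over the two tails produces the factor of $2$.

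The main obstacle I expect is the careful verification of the conditional range bound on $D_k$: one has to be precise about how independence allows the suprema and infima to be pulled inside the conditional expectation without inflating the gap beyond $c_k$. Once that lemma is in hand, the Hoeffding/Chernoff machinery is entirely routine, and no new probabilistic input is required.
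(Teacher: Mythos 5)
Your proof is correct and is exactly the standard Doob-martingale / bounded-differences argument (Hoeffding's lemma on the conditional increments, Chernoff, union bound over two tails), which is the proof McDiarmid himself gives. The paper does not reprove this lemma at all --- it simply cites McDiarmid \cite{M89} --- so there is nothing in the paper's exposition to diverge from; your reconstruction matches the standard source.
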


We prove the following lemma.

\begin{lemma}\label{bigmus}
\assumebwo
For every $\e>0$ there is $\delta>0$ such that, 
for all sufficiently large $n$,
$$ 
 \Pr(\text{$F$ contains a full subformula of order $>\e n$})
 < \exp(-{\delta}n) .
$$
\end{lemma}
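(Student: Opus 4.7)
The plan is to combine the pure literal threshold characterization \eqref{litthresh} with a concentration argument via Lemma \ref{ah}. By monotonicity of the event in the clause set, we may assume throughout that $\a=\a_0$ is constant with $\a_0<\astarr$. A full subformula of order $>\e n$ exists in $F$ if and only if the core $\Gamma(F)$ of $F$ (the unique maximal full subformula, i.e.\ the fixed point of the pure literal rule) has order $>\e n$, so it suffices to bound $\Pr(|V(\Gamma(F))|>\e n)$ by $e^{-\delta n}$.

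Consider the parallel pure literal process $F=F_0,F_1,F_2,\ldots$, where $F_{i+1}$ is obtained from $F_i$ by simultaneously deleting all pure literals, and write $Z_k=|V(F_k)|$; note $|V(\Gamma(F))|=\lim_k Z_k \leq Z_k$ for every $k$. By Molloy's subcritical branching-process analysis \cite{M05} underlying \eqref{litthresh}, $\mathbb{E}[Z_k]/n\to0$ as $k\to\infty$, so we can fix a finite $k=k(\a_0,\e)$ for which $\mathbb{E}[Z_k]\leq(\e/4)n$.

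We apply Lemma \ref{ah} to $Z_k$ with the clauses exposed in $n$ independent blocks (for instance, grouped by lowest-indexed variable). A change in one block can affect $Z_k$ only through variables reached within $k$ rounds of pure literal propagation, that is, within a ``$k$-pure-literal neighborhood'' of the block. Since the pure literal process below $\astarr$ is subcritical, such a neighborhood has an exponentially decaying tail, and for a suitable constant $L=L(k,\a_0)$ the event $G$ that every block's $k$-neighborhood has size at most $L$ satisfies $\Pr(G^c)\leq e^{-\delta_1 n}$. Replacing $Z_k$ by a surrogate $\hat Z_k$ which agrees with $Z_k$ on $G$ and is frozen as soon as an exposed block reveals a $k$-neighborhood larger than $L$, we obtain a function with Lipschitz constants uniformly bounded by $L$. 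Lemma \ref{ah} applied to $\hat Z_k$ with $n$ blocks of Lipschitz constant $L$ yields $\Pr(|\hat Z_k-\mathbb{E}\hat Z_k|>(\e/4)n)\leq e^{-\delta_2 n}$, and combining with the tail bound on $G^c$ gives $\Pr(Z_k>\e n)\leq e^{-\delta n}$.

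The main obstacle is Lipschitz control in the application of Lemma \ref{ah}: naive bounds via the ambient $k$-neighborhood in the clause hypergraph are inadequate, because cascades in the pure literal rule can in principle propagate over many rounds. The key technical ingredient is the truncation via the subcritical pure-literal-propagation neighborhood together with a branching-process tail bound below $\astarr$, and it requires a consistent choice of $k$ and $L$ in terms of $\a_0$ and $\e$ so that both $\mathbb{E}[\hat Z_k]$ and the truncation-failure probability are simultaneously small.
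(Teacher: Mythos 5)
There is a genuine gap at the heart of your concentration step. You want to apply Lemma~\ref{ah} to $Z_k$ with $n$ independent block-coordinates (clauses grouped by lowest-indexed variable), and you handle the unbounded Lipschitz constants by a truncated surrogate $\hat Z_k$ that agrees with $Z_k$ on the event $G$ that ``every block's $k$-neighborhood has size at most $L$.'' But for a constant $L = L(k,\a_0)$, the claim $\Pr(G^c) \le e^{-\delta_1 n}$ is false. Each block's $k$-neighborhood is governed by a (subcritical) branching-type tail, giving $\Pr(\text{one block exceeds }L) \approx e^{-cL}$ for some constant $c$. A union bound over $n$ blocks then gives $\Pr(G^c) \lesssim n\,e^{-cL}$, which is not exponentially small in $n$ unless $L$ grows linearly; and if $L$ grows at all (say $L = \Theta(\log n)$), the Azuma bound $\exp(-\Theta(n/L^2))$ ceases to be $e^{-\Omega(n)}$ and, worse, $\Pr(G^c)$ is only polynomially small, so the combined bound cannot reach $e^{-\delta n}$. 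This is the same obstacle that arises whenever one tries to trade ``Lipschitz with high probability'' for ``Lipschitz always'' in McDiarmid's inequality by freezing on a bad event: the bad-event probability has to be at least as small as the bound you are aiming for, which is not the case here.

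The paper sidesteps this by changing the random variable: instead of concentrating $Z_k$ directly, it concentrates the \emph{number of good variables}, where a variable $v$ is declared good only if \emph{both} $v \notin V(F_s)$ \emph{and} every variable in $B_s(v)$ has degree at most a constant $K(s,\a)$. This degree cap is built into the definition, not imposed as a separate high-probability event. Because of the cap, changing a single clause can (deterministically, for all instances) alter the good/bad status of at most $r^{s+1}K^s$ variables: a spoiled $v$ must have a short path to the changed clause through variables of degree $\le K$, and one can enumerate such paths. To keep the number of exposure coordinates linear with a fixed per-coordinate Lipschitz constant, the paper also couples $\FFr_{n,p}$ to a fixed-$M$-clause model $\FFr_{n,M}$ (with $M = O(n)$), applies Lemma~\ref{ah} there with one clause per coordinate, and transfers the bound back via monotonicity of goodness. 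Your block-by-lowest-variable idea is an appealing alternative to this coupling, but without the deterministic degree cap the Lipschitz constants of the blocks are genuinely unbounded, and the truncation you propose cannot recover the exponential rate.
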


\begin{proof}
Since the probability above is increasing in $\a$, 
it is enough to prove the result for $\a$ constant, 
replacing $\a(n)$ by $\a = \sup_n\a(n)$.
We will show that, with the required high probability,
the pure literal rule leaves fewer than $\e n$ variables,
establishing the lemma.
(A full subformula is not affected by the pure literal rule,
so if the ``kernel'' left is small,
$F$ contained no large subformula.)

Consider the following instantiation of the pure literal rule:
Set $F_0=F$, so $|F_0|=n$.
For $i\ge0$, 
obtain $F_{i+1}$ from $F_i$ by setting all pure literals to True, 
and then removing 
these literals and the clauses they satisfied. 
Molloy showed that (for any $\a<\astar$) 
there is a sequence $\lambda_s\to0$ such that, for any $s$,
$$\E |F_s|=(1+o(1))\lambda_s n.$$
Let us pick $s$ such that $\lambda_s<\e/8$.
The result will follow from a concentration argument
which we now give 
\jour{in detail.}
\conf{in detail, abbreviated slightly for this conference version.}

A \emph{path of length $\len$} in $F$ is a sequence 
$v_0, C_0, v_1, C_1,\ldots,C_\len,v_\len$, 
alternating between variables and clauses, 
such that each clause $C_i$ contains the variables that precede and follow it
(either with or without negation).  
For a variable $v$ and positive integer $\len$, 
we define the ball 
$B_\len(v)$ to be the subformula of $F$ 
containing all variables and clauses that lie on paths 
of length at most $\len$ starting at $v$.
(Note that each clause in a ball is fully supported by variables in it.)

It is part of Molloy's argument,
and clear with a little thought,
that the event that $v$ belongs to $F_s$ 
depends only on $B_s(v)$.  
We shall say that a variable $v$ is \emph{good} if it has the 
following two properties:
\begin{itemize}
\item $v$ does not belong to $V(F_s)$ (the set of variables of $F_s$), and
\item no variable in $B_s(v)$ belongs to more than $K(s,\a)$ clauses.
\end{itemize}
Here, $K(s,\a)$ 
is a constant chosen sufficiently large 
that the second property holds with probability at least $1-\e/8$.
There exists such a $K(s,\a)$ independent of $n$ because the scaling 
of \eqref{parameters} was chosen precisely to make the local structure
of an instance independent of $n$.
\jour{
For a simple rigorous argument, the degree of any variable in 
$B_s(v)$ is at most $|B_{s+1}(v)|$,
$\E[ |B_{s+1}(v)| ]$ is obtained by multiplying
the number of paths by their probability of being present
and has an upper bound independent of $n$,
and taking $K(s,\a)$ to be $8/\e$ times this value,
the desired probability follows from Markov's inequality.
}

Since the first property occurs with probability $1-\lambda_s+o(1)$,
we see that for large enough $n$, 
$v$ is good with probability greater than $1-\e/4$.  
We will prove that, 
with failure probability $\exp(-{\delta}n)$, 
there are at least $(1-\e)n$ good variables.
Now note that the pure literal rule can never set 
a variable belonging to a full subformula.
Thus if $H$ is a full subformula of $F$ then
$V(H)\subseteq\bigcap_{i=0}^\infty V(F_i)$.
In particular, $V(H)\subset V(F_s)$ and so no good variable 
can belong to a full subformula.
The claimed result is then immediate.

To prove our concentration bound, we first 
claim that changing a single clause in an instance cannot change 
the number of good variables by more than $2r^{s+1}K^s$.  
\conf{
This is the purpose of the second goodness condition;
we omit the details from this version.
}%
\jour{
(This is the purpose of the second goodness condition.)
Suppose we add a clause $C$ to an instance $I$ 
to obtain an instance $I'$.  
If adding $C$ spoils a variable $v$ ($v$ is good in $I$ but not in $I'$),
$C$ must contain some variable $u \in B_s(v)$.
Choose a shortest path $P$ from $u$ to $v$.
$P$ has length at most $s$, and $P \subset I$ 
(it is shortest, so it doesn't contain $C$),
thus $P \subset B_s(v)$,
and since $v$ was good in $I$,
$P$ contains no variables with degree (in $I$) more than $K$. 
Generating all paths of this sort, 
there are $r$ choices for the variable $u \in C$,
and from each variable at most $K$ choices for the following clause
and $r$ choices for the succeeding variable, 
so there are at most $r^{s+1} K^s$ such paths,
and at most that many spoiled variables.
Therefore, adding a clause can decrease 
the number of good variables by at most $r^{s+1}K^s$, 
and similarly deleting a clause can
create at most $r^{s+1}K^s$ good variables.
The claim follows.
}

\conf{%
Finally, to use the Azuma-Hoeffding inequality (Lemma~\ref{ah})
we need to argue in terms of a fixed number of clauses.
For this purpose we note that goodness is a monotonic property
(if $v$ is not good, adding clauses cannot make it good),
and consider, in addition to the original model $\FFr_{n,p}$,
a second model $\FFr_{n,p'}$ with $p'>p$,
and a model $\FFr_{n,M}$ with a fixed number of clauses
$M$ equal to the expected number of clauses in $\FFr_{n,p''}$
with $p'' = (p+p')/2$.
The three models have a natural coupling
failing with exponentially small probability, 
and applying Azuma-Hoeffding to the $\FFr_{n,M}$ model
gives the desired result.
(The full version's proof is marginally longer than this summary.)
}

\jour{%
Finally, to use the Azuma-Hoeffding inequality (Lemma~\ref{ah})
we need to argue in terms of a fixed number of clauses.
For this purpose we note that goodness is a monotonic property
(if $v$ is not good, adding clauses cannot make it good),
and couple the original model $\FFr_{n,p}$
to one with a fixed and typically larger number of clauses.
Specifically, first observe that
the probability of being good is a continuous function of $\a$ 
(increasing $\a$ slightly adds a small linear number of new clauses,
each of which spoils at most $r^{s+1}K^s$ good variables,
a small fraction of the nearly $n$ such variables).
We can therefore choose $\a'>\a$ 
such that in an instance with clause probability $p'=\a'n^{-(r-1)}$,
each variable is good with probability at least $1-\e/3$.
Let $p''=(p+p')/2$ and $M=\lfloor p''2^r\binom nr\rfloor$.
Define an \emph{$M$-clause} model  $\FFr_{n,M}$ 
where we sample $M$ clauses uniformly \emph{with replacement}
 from the set of all possible clauses, then discard duplicates
(because of which this is not exactly the analogue of the usual $G_{n,M}$
model).
It is easy to check that, for some $\delta_0>0$,
with probability $1-O(\exp(-\delta_0n))$,
an instance of $\FFr_{n,p}$ has fewer clauses than
one of $\FFr_{n,M}$
which in turn has fewer clauses than one of $\FFr_{n,p'}$.
There is therefore a coupling between the three models in which, 
with probability $1-O(\exp(-\delta_0n))$,
the corresponding random formulae satisfy $F_p \subset F_M \subset F_{p'}$.

We now complete the argument. 
By Lemma \ref{ah} (with $X_i=C_i$),  
in $\FFr_{n,M}$,
with probability at least $1-O(\exp(-\delta_1 n))$ 
the number of good variables is within $\e n/8$ of its expectation.  
By the coupling with $\FFr_{n,p'}$,
this expectation is at least $(1-\e/2)n$ 
(we inflate the $\e/3$ slightly to compensate for
the exponentially small failure probability).
So in $\FFr_{n,M}$, with exponentially small failure probability,
we get at least $(1-2\e/3)n$ good variables.
Finally, the coupling with $\FFr_{n,p}$ shows that,
with exponentially small failure probability,
we get at least $(1-\e)n$ good variables.
}
\end{proof}

\subsection{Main results}\label{satmain}

Consider the set of all \mufs.  
Order the set of values for excess as 
$\ff_1 < \ff_2 < \cdots$;
by Lemma~\ref{fnumbers} these values
are some subset of the positive integers).
For $s>0$, we write $\FF_s$ for the set of \mufs $F'$ with $\ex(F')=\ff_s$; 
note that by Lemma \ref{fnumbers} each $\FF_s$ is finite.

\begin{theorem}\label{struct1}
Fix $i>0$.
\assumeb
If we condition on the event that $F$ is unsatisfiable
and contains no \muf $F'$ with $\ex(F') < \ff_i$ 
then, with probability $1-O(1/n)$, 
the following statements hold:
\begin{enumerate}[(i)]
\item $F$ contains a unique \muf $F_0$.  
\item $F_0 \in \FF_i$. 
\item For each $F'\in\FF_i$, 
we have $\Pr(F_0\cong F')\sim \frac{\alpha^{e(F')}2^{|F'|}}{|\aut F'|}/Z$, 
where $Z=\sum_{F'\in\FF_i}\frac{\alpha^{e(F')}2^{|F'|}}{|\aut F'|}$. 
\end{enumerate}
\end{theorem}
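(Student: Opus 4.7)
The plan is to reduce to \mufs of bounded order and then carry out a first/second moment calculation over the resulting finite collection of isomorphism classes. I would first apply Corollary~\ref{smallmus2} with $s = \ff_i + 2$, together with Lemma~\ref{bigmus} (which is available because $\sup_n \a(n) < \astarr$), to obtain constants $t_0$ and $\e_0>0$ such that, with failure probability $o(n^{-\ff_i - 1})$, $F$ contains no full subformula of order exceeding $t_0$. Since every \muf is a full formula, every \muf of $F$ then has order at most $t_0$, and by Lemma~\ref{fnumbers} only finitely many isomorphism classes of \mufs need be considered on this good event.

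Next I would estimate the probability of the conditioning event $E := \{F \text{ is unsatisfiable, and contains no \muf of excess } < \ff_i\}$. The key observation is that a copy in $F$ of any $F' \in \FF_i$ is automatically itself a \muf of $F$: by minimality of $F'$ no proper subformula is unsatisfiable, so the \muf of $F$ sitting inside such a copy can only be the whole copy. Using Claim~\ref{prsome} and summing over $F' \in \FF_i$, the probability that $F$ contains some such copy equals $(1+O(1/n))\, Z\, n^{-\ff_i}$, with the correction for multiply counted copies (pairs of non-nested full formulae) absorbed via Lemma~\ref{prtwo}. The additional ``no \muf of smaller excess'' restriction in $E$ excludes a fixed finite list of formulae, and by Lemma~\ref{prnot} alters the probability only by a factor $1+O(1/n)$. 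Hence $\Pr(E) = (1+O(1/n))\, Z\, n^{-\ff_i}$.

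To finish, I would bound three ``bad'' sub-events by $o(n^{-\ff_i - 1})$: (a) existence of a \muf of excess $\geq \ff_{i+1}$, which by Claim~\ref{prsome} summed over the finitely many relevant isomorphism classes has probability $O(n^{-\ff_{i+1}})$ and hence is $o(n^{-\ff_i - 1})$ since the excess values of Lemma~\ref{fnumbers} are positive integers; (b) existence of two distinct (necessarily non-nested, by minimality) \mufs both of excess $\ff_i$, which by Lemma~\ref{prtwo} occurs with probability $O(n^{-\ff_i - 1})$; and (c) the reduction to order $\leq t_0$ from step one. Dividing by $\Pr(E) = \Theta(n^{-\ff_i})$ gives conditional failure probability $O(1/n)$, establishing (i) and (ii). For (iii), the ``numerator'' probability that $F$'s unique \muf is isomorphic to a fixed $F' \in \FF_i$ equals $\Pr(F \supset F')$ up to the same $o(n^{-\ff_i - 1})$ corrections, and by Claim~\ref{prsome} is $(1+O(1/n)) \cc(F') \a^{e(F')} n^{-\ff_i}$. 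Dividing by $\Pr(E)$ yields $(1+o(1))\, \frac{2^{|F'|} \a^{e(F')}}{|\aut F'|\, Z}$ as claimed.

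The main obstacle is the bookkeeping reconciling the three size ranges: the medium-size bound (Corollary~\ref{smallmus2}) gives only polynomial control and must be instantiated with $s$ large enough (here $s \geq \ff_i + 2$) to outpace the $\Theta(n^{-\ff_i})$ denominator from $\Pr(E)$; and one must consistently exploit the identification of ``$F$ contains a \muf isomorphic to $F'$'' with ``$F \supset F'$'' (valid precisely because each $F' \in \FF_i$ is minimal unsatisfiable) when applying Claim~\ref{prsome} and Lemma~\ref{prtwo}. Once these details are arranged, the preceding lemmas supply all the estimates directly and the theorem follows by combining them.
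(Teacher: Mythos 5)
Your proposal is correct and follows essentially the same strategy as the paper's proof: reduce to bounded-order subformulae via Corollary~\ref{smallmus2} together with Lemma~\ref{bigmus}, then apply Claim~\ref{prsome}, Lemma~\ref{prtwo}, and Lemma~\ref{prnot} to the resulting finite collection. The only differences are cosmetic (you estimate $\Pr(E)$ directly and divide, whereas the paper conditions on the ``no small-excess MUF'' event first and appends unsatisfiability at the end; and you take $s=\ff_i+2$ where $s=\ff_i+1$ already suffices, a harmless bit of slack).
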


\begin{proof}
This will follow by combining results from previous sections.
Let $C$ be the condition that $F$ contain no \muf $F'$ with $\ex(F')<\ff_i$
(but not that $F$ is unsatisfiable).

Choose $t_0$ large enough and $\e_0>0$ small enough 
so that Corollary \ref{smallmus2} applies with $s=\ff_i+1$.
Together with Corollary \ref{bigmus} (with $\e=\e_0$), 
we conclude that the probability that $F\in\FF_{n,p}^r$
contains any full subformula on more than $t_0$ vertices is $o(n^{-s})$.
This is also true after conditioning, since for any event $E$,
$\Pr(E \cond C) = \Pr(E \wedge C) / \Pr(C) 
 \leq \Pr(E)/\Pr(C) = (1+O(1/n)) \Pr(E)$.

There are finitely many possibilities for
minimal unsatisfiable subformulae on $t_0$ or fewer vertices. 
 From Lemma~\ref{prnot} and Lemma~\ref{prsome},
for any $F_0$ with $\ex(F_0) \geq \ff_i$,
$\Pr(F \supset F_0 \cond C) = (1+O(1/n)) \Pr(F \supset F_0)
 = (1+O(1/n)) \cc(F_0) \a^{e(F_0)} n^{-\ex(F_0)} $.
When $F \in \FF_i$, i.e., $\ex(F_0) = \ff_i$, 
this is a relatively likely event,
with probability $\Theta(n^{-\ff_i})$;
otherwise it is $O(1/n)$ less likely.

For any two MUFs $F_1$ and $F_2$ with $\ex(F_1),\ex(F_2) \geq \ff_i$,
$\Pr(F \text{ contains non-nested copies of $F_1$ and $F_2$} \cond C)
 = (1+O(1/n)) 
 \Pr(F \text{ contains non-nested copies of $F_1$ and $F_2$})
 = O(n^{-\ff_i+1})$
by Lemma~\ref{prtwo}.

Now condition on the event that $F$ is unsatisfiable, i.e., 
that at least one of the above cases occurs.
Then the middle case, with $\ex(F_0)=\ff_i$, dominates the other cases.
\end{proof}

The same proof gives the analogous statement for minimal full subformulae.  
Consider the set of all MFFs, 
and order the set of values for excess as 
$\ff'_1 < \ff'_2 < \cdots$;
again, these values are some subset of the positive integers.
For $s>0$, we write $\FF'_s$ for the set of MFFs $F'$ 
with $\ex(F')=\ff'_s$; 
note that by Lemma \ref{fnumbers} each $\FF'_s$ is finite.

\begin{theorem}\label{struct1a}
Fix $i>0$.
\assumeb
If we condition on the event that $F$ contains a full subformula, but
no full subformula $F'$ with $\ex(F') < \ff'_i$ 
then, with probability $1-O(1/n)$, 
the following statements hold:
\begin{enumerate}[(i)]
\item $F$ contains a unique minimal full subformula $F_0$.  
\item $F_0 \in \FF'_i$. 
\item For each $F'\in\FF'_i$, 
we have $\Pr(F_0\cong F')\sim \frac{\alpha^{e(F')}2^{|F'|}}{|\aut F'|}/Z$, 
where $Z=\sum_{F'\in\FF'_i}\frac{\alpha^{e(F')}2^{|F'|}}{|\aut F'|}$. 
\end{enumerate}
\end{theorem}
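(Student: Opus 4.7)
The plan is to imitate the proof of Theorem~\ref{struct1} nearly verbatim, with MFFs replacing \mufs, $\ff'_i$ replacing $\ff_i$, and $\FF'_i$ replacing $\FF_i$.  Two observations make the translation straightforward.  First, a formula has ``no full subformula of excess $<\ff'_i$'' iff it has ``no MFF of excess $<\ff'_i$'': any full subformula contains an MFF, and if $F''$ is a proper full subformula of another full formula $F'$ then applying Proposition~\ref{unionExcess} with $H_1=F'$ and $H_2=F''$ (noting $F'\not\subseteq F''$) gives $\ex(F')=\ex(F'\cup F'')\ge\ex(F'')+1$, so an MFF inside any full subformula never has larger excess.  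Second, two distinct MFFs can never be nested (otherwise the larger one would fail to be minimal), so Lemma~\ref{prtwo} applies to any pair of distinct MFFs just as it did to distinct \mufs.

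First I would let $C$ be the event that $F$ contains no MFF of excess $<\ff'_i$.  Applying Corollary~\ref{smallmus2} with $s=\ff'_i+1$ to obtain $t_0,\e_0$, and then Lemma~\ref{bigmus} with $\e=\e_0$, the probability that $F$ contains a full subformula of order greater than $t_0$ is $o(n^{-s})$; since $\Pr(C)=1-o(1)$ by Claim~\ref{prsome}, this bound survives conditioning on $C$ (as $\Pr(E\cond C)\le\Pr(E)/\Pr(C)=(1+o(1))\Pr(E)$).  Then I would handle the finitely many MFFs of order at most $t_0$: for each such $F_0$ with $\ex(F_0)\ge\ff'_i$, Lemma~\ref{prnot} applied to the finite family of MFFs of excess $<\ff'_i$ combined with Claim~\ref{prsome} gives $\Pr(F\supset F_0\cond C)=(1+O(1/n))\cc(F_0)\a^{e(F_0)}n^{-\ex(F_0)}$, which is $\Theta(n^{-\ff'_i})$ when $\ex(F_0)=\ff'_i$ and $O(n^{-\ff'_i-1})$ otherwise.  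By Lemma~\ref{prtwo} and the nesting observation, the probability that $F$ contains two distinct MFFs of excess $\ge\ff'_i$ is $O(n^{-\ff'_i-1})$, even conditional on $C$.

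Finally I would impose the additional conditioning that $F$ contain a full subformula, equivalent under $C$ to containing some MFF of excess $\ge\ff'_i$, an event of probability $\Theta(n^{-\ff'_i})$ dominated by contributions from single copies of MFFs in $\FF'_i$.  The competing events (two or more MFFs, or a unique MFF of excess $>\ff'_i$) each contribute only an $O(1/n)$ fraction of the mass, yielding (i) and (ii); for (iii), the conditional probability that the unique MFF is isomorphic to a given $F'\in\FF'_i$ is proportional to $\cc(F')\a^{e(F')}=\tfrac{2^{|F'|}\a^{e(F')}}{|\aut F'|}$, giving the stated ratio after normalization by $Z$.  I anticipate no real obstacle beyond the initial equivalence between the ``full subformula'' and ``MFF'' forms of the conditioning highlighted above, which is the only point where the translation from Theorem~\ref{struct1} is not entirely mechanical.
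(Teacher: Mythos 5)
Your proposal is correct and follows exactly the route the paper intends: the paper simply notes that ``the same proof gives the analogous statement'' for minimal full subformulae, and your write-up carries out that translation faithfully, substituting MFFs for MUFs throughout the argument of Theorem~\ref{struct1}. You also correctly flag and resolve the one point where the translation is not purely notational — namely that the conditioning in Theorem~\ref{struct1a} is phrased in terms of general full subformulae rather than MFFs — by observing via Proposition~\ref{unionExcess} that any full subformula of excess $<\ff'_i$ must contain an MFF of even smaller excess, so the two forms of the conditioning coincide.
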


We can also write an asymptotic expansion for the probability that $F$ 
is unsatisfiable or that the pure literal rule fails
(i.e., that $F$ has a full subformula).

\begin{theorem}\label{struct2}
\assumeb
For every full formula $H$ there is a sequence of polynomials
$p_1^{(H)},p_2^{(H)},\ldots$ with rational coefficients such that, for any
$\smax$,
\begin{align}\label{exact1}
\Pr(F \text{ contains a copy of }H)
 &= \sum_{s=1}^{\smax} p_s^{(H)}(\a) n^{-s} + O(n^{-\smax-1}).
\intertext{%
Furthermore, there is a sequence of polynomials $p_1,p_2,\ldots$ with rational
coefficients such that, for any $\smax$ and any $\a<\astar$,
}
\label{exact2}
\Pr(F \text{ is unsatisfiable})
 &= \sum_{s=1}^{\smax} p_s(\a) n^{-s} + O(n^{-\smax-1}) ,
\intertext{%
and similarly a sequence $p'_1,p'_2,\ldots$ such that
}
\label{exact2p} \tag{\ref{exact2}$'$}
\Pr(\text{the pure literal rule fails on } F)
 &= \sum_{s=1}^{\smax} p'_s(\a) n^{-s} + O(n^{-\smax-1}) .
\end{align}
\end{theorem}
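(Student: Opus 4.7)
The plan is to establish \eqref{exact1} first by inclusion-exclusion on copies of $H$ in $F$, then deduce \eqref{exact2} (and, identically, \eqref{exact2p}) by reducing the unsatisfiability event to a finite disjunction over small MUFs and applying a second layer of inclusion-exclusion. For \eqref{exact1}, fix a full $H$ and let $X_H$ count copies of $H$ in $F$. Bonferroni gives, for any $K$,
$$
\Pr(X_H\ge 1)\;=\;\sum_{k=1}^{K}(-1)^{k+1}\E\tbinom{X_H}{k}\;+\;(-1)^K R_K,\qquad 0\le R_K\le \E\tbinom{X_H}{K+1}.
$$
I would compute each $\E\binom{X_H}{k}$ by grouping ordered $k$-tuples of distinct copies of $H$ according to the isomorphism class of their union $H'$, reducing it to a finite weighted sum $\sum_{H'} c_k(H')\,\E X_{H'}$ with the expectations given explicitly by \eqref{EXprime}. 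Since $\falling{n}{|H'|}/n^{|H'|}$ is a polynomial in $1/n$ with rational coefficients, since $\E X_{H'}=\Theta(n^{-\ex(H')})$, and since a union of copies of $H$ is itself full (so Corollary~\ref{smallmus2} restricts attention to $H'$ of order at most a fixed $t_0$ up to error $o(n^{-\smax-1})$), the first $\smax$ terms collect into polynomials $p_s^{(H)}(\a)$ with rational coefficients. The truncation error $R_K$ is controlled by iterating Proposition~\ref{union} together with the bound $\ex(H')\ge(r-2)|H'|/r$ from Lemma~\ref{fnumbers} to show that, for $K$ sufficiently large, every $H'$ covered by $K+1$ distinct copies of $H$ has $\ex(H')>\smax$.

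For \eqref{exact2}, the event that $F$ is unsatisfiable is exactly the event that $F$ contains a MUF. Applying Corollary~\ref{smallmus2} with $s=\smax+1$ and Lemma~\ref{bigmus} (here the hypothesis $\a<\astarr$ is used) shows that $F$ contains a full subformula of order exceeding $t_0$ with probability $o(n^{-\smax-1})$; and within the finite collection of MUFs of order at most $t_0$, those with excess exceeding $\smax$ each contribute only $O(n^{-\smax-1})$ by Claim~\ref{prsome}. It therefore suffices to compute $\Pr\bigl(\bigcup_{H\in\mathcal A}\{F\supset H\}\bigr)$ for the finite set $\mathcal A$ of MUFs with $|H|\le t_0$ and $\ex(H)\le\smax$. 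Inclusion-exclusion over $\mathcal A$ expresses this as a finite signed sum of joint probabilities $\Pr\bigl(\bigcap_{H\in\mathcal S}\{F\supset H\}\bigr)$, each of which I would handle by the same tuple-enumeration argument used for \eqref{exact1}, now running over tuples consisting of one copy of each $H\in\mathcal S$ and summing \eqref{EXprime} over the resulting isomorphism classes of unions. Collecting coefficients yields the polynomials $p_s(\a)$; the proof of \eqref{exact2p} is identical after replacing MUFs by MFFs throughout.

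The main obstacle is the bookkeeping needed to control the truncation error in \eqref{exact1} uniformly in $n$: one must verify that $\E\binom{X_H}{K+1}=O(n^{-\smax-1})$ for $K$ large enough, and the subtlety is that many distinct copies of $H$ can coexist inside a single small subformula, so Proposition~\ref{union} cannot simply be applied $k{-}1$ times in sequence and must instead be combined with Lemma~\ref{fnumbers}'s $\ex\ge(r-2)|H'|/r$ bound to force the excess of any container $H'$ of $K+1$ copies above $\smax$. Once \eqref{exact1} is in hand, the passage to \eqref{exact2} and \eqref{exact2p} is essentially a finite combinatorial calculation, and the hypothesis $\a<\astarr$ enters only through Lemma~\ref{bigmus}, which controls the exponentially unlikely large-subformula contribution.
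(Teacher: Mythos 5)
Your proposal is correct and follows essentially the same path as the paper: Bonferroni inclusion-exclusion on factorial moments of $X_H$, grouping tuples of copies by the isomorphism class of their union, with the truncation error controlled by forcing large excess on any such union. Your observation that Lemma~\ref{fnumbers}'s bound $\ex(H')\ge (r-2)|H'|/r$ (rather than a naive iteration of Proposition~\ref{union}) is what caps $|H'|$ and hence the number of distinct copies of $H$ that can fit inside a low-excess union fills in a detail the paper leaves implicit; and your organization of \eqref{exact2} by first doing a finite inclusion-exclusion over isomorphism classes and then tuple-enumerating each intersection is equivalent to, though a bit bulkier than, the paper's cleaner device of Bonferroni applied to the single count $X$ of all copies of MUFs of excess at most $\smax$.
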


\begin{proof}
Fix $\smax$ and $\a$.
Note that \eqref{EXprime} can be written as
\jour{%
\begin{equation}\label{exactly}
\E X_H=\a^{e(H)} p_H(1/n),
\end{equation}
}
\conf{%
$ \E X_H=\a^{e(H)} p_H(1/n)$,
}
where $p_H$ is a polynomial of degree $\ex(H)$.  
The $k$th factorial moment of $X_H$ 
is a sum of expectations $\E_{H'}$ over configurations $H'$
consisting of the union of $k$ distinct copies of $H$,
and so is a sum of expressions 
\jour{like \eqref{exactly}.}
\conf{of this form.}

Now for $k\ge1$, $\Pr(X_H=k)$ and $\Pr(X_H\ge k)$ 
can be written as alternating sums in the factorial moments 
(see \cite[Section I.4]{B85}), 
and these sums satisfy the alternating inequalities.
If $K$ is fixed and sufficiently large then the $K$th factorial moment has
value $O(n^{-\smax-1})$,
as all its constituent configurations have excess larger than $\smax$.
Thus we can truncate our sum after a constant number of terms,
with error $O(n^{-\smax-1})$.
Each term is of form \eqref{EX},
so we obtain an expression of form \eqref{exact1}.

We obtain \eqref{exact2} similarly.
Let $\mathcal F$ be the set of minimal unsatisfiable subformulae 
whose excess is at most $\smax$,
and let $X$ be the number of subformulae of $F$ that belong to $\mathcal F$.
As in the previous case, 
asymptotic expansions for the factorial moments of $X$
all have form \eqref{exact1},
and once again applying inclusion-exclusion 
(and noting that we again have the alternating inequalities),
truncating at the $n^{-\smax}$ terms gives
an asymptotic expansion of form \eqref{exact2}.
Minimal unsatisfiable subformulae of excess greater than $\smax$
can be incorporated into the $O(n^{-\smax-1})$ term
by Lemmas \ref{smallmus2} and \ref{bigmus}.
The argument for \ref{exact2p} is identical,
just phrased in terms of 
\conf{minimal full subformulae.}
\jour{minimal full subformulae rather than minimal unsatisfiable subformulae.}
\end{proof}

Let us note that it is only a finite (if tedious) computation to determine the polynomials $p_s$, and $p_s^{(H)}$ for any given $H$ and $s$.

\section{Structural results for sparse random graphs and hypergraphs}

We now prove results on the $k$-core and $k$-colorability 
of a sparse random graph or hypergraph.
The definitions, results, and proofs here precisely parallel those of
Section~\ref{formulae}.

\conf{%
In this conference version we omit all the details, even the main theorems.
Our model is a $\GGr(n,p)$ random $r$-uniform hypergraph,
with clause probability 
$
p = \frac{cn}{\binom{n}{r}}
= \a n^{-(r-1)} 
$.
Excess is 
$\ex(H) = (r-1)e(H) - |H|$.
In place of full formulae we have $k$-dense graphs,
with minimal degree $\ge k$;
in place of unsatisfiable formulae, non-$k$-colorable graphs;
in place of the pure literal threshold,
a $k$-core threshold $\astarg_{k,r}$;
and the expected number of copies of $H$ in $G \in \GGra$ is
$
\E X_H
 = (1+O(1/n)) \ccg(H) \a^{e(H)} n^{-\ex(H)} 
$,
an expression appearing in the main theorems.

Our results will all hold for any $r \geq 2$, $k \geq 2$, $r+k > 4$,
where the final restriction is needed for an analogue of Lemma~\ref{fnumbers}.
In the analogue of Lemma~\ref{smallmus}, 
instead of picking the smallest literal not yet covered by a clause,
we pick the smallest vertex not yet covered $k$ times.
Otherwise, all the theorems go through as for formulas,
with just minor changes to a few calculations.
}

\jour{%
We write $\GGr(n,p)$ for the random $r$-uniform hypergraph model analogous to 
$\GG(n,p)$: a hypergraph $G\in \GGr(n,p)$ has vertex
set $[n]$, and each possible edge (of size $r$) is independently present with probability $p$.
We work with the scaling
\begin{align} \label{parameters2}
p &= \frac{cn}{\binom{n}{r}}
= \a n^{-(r-1)} ,
\end{align}
where $p$ is the clause probability, 
$cn$ is the expected number of clauses,
and $\a$ is a convenient parametrization.

For an $r$-uniform hypergraph $H$ we define
$$\ex(H) = (r-1)e(H) - |H|.$$
We say that $H$ is \emph{$k$-dense} if it has minimal degree $\delta(H)\ge k$.
The hypergraph $k$-core is defined in the usual way, 
for example via the process detailed in 
the proof of Lemma~\ref{bigkdense},
and it is $k$-dense.
$H$ is a minimal $k$-dense hypergraph if it is nonempty 
and has no proper $k$-dense subhypergraph.

Pittel, Spencer and Wormald \cite{PSW96} determined the threshold $c_k$ for
the appearance of a $k$-core in a random graph $G \in \GG(n,c_k/n)$.
They further showed that,
for any fixed $c<c_k$ and $\e>0$,
the probability that $G \in \GG(n,c/n)$
has a $k$-core of size bigger than $\e n$ is at most $\exp(-n^\delta)$
(in fact, they did rather more).
Molloy \cite{M05} 
determined the $k$-core threshold $\astarg = \astarg_{k,r}$
for a random $r$-uniform hypergraph $G \in \GGra$
and proved that for any fixed $\a < \astarg$ and $\e>0$,
the probability that $\GGra$ has a $k$-core of size bigger than $\e n$
approaches~0.

Let us write $X_H(G)$ for the
number of copies of $H$ in $G$.
Then
\begin{align}
\E X_H
 & = \frac1{|\aut H|} \binom{n}{|H|} |H|! p^{e(H)}
\notag \\
 & = (1+O(1/n)) \ccg(H) \a^{e(H)} n^{-\ex(H)} . 
 \label{EXG}
\end{align}

\begin{lemma}\label{fnumbersg}
Suppose that $r,k\ge 2$ and $r+k>4$.
If $H$ is a $k$-dense, $r$-uniform hypergraph then 
$$\ex(H)\ge\frac{(k-1)(r-1)-1}{r}|H|.$$
Furthermore, for every $s>0$, 
there are (up to isomorphism) only finitely many $k$-dense graphs $H$ with $\ex(H)=s$.
\end{lemma}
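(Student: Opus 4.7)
The plan is to adapt the proof of Lemma~\ref{fnumbers} almost verbatim, replacing the observation that ``every variable of a full formula occurs at least twice (once with each sign)'' by its hypergraph analogue, that every vertex of a $k$-dense hypergraph has degree at least $k$.

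First I would use a simple double-counting argument on the degree sum. Since $H$ is $k$-dense, $\sum_v d(v) \geq k|H|$, and since each edge of an $r$-uniform hypergraph contributes exactly $r$ to this sum, $r\, e(H) \geq k|H|$, i.e., $e(H) \geq k|H|/r$. Substituting into the definition $\ex(H) = (r-1) e(H) - |H|$ gives
$$\ex(H) \;\geq\; \frac{(r-1)k}{r}|H| \;-\; |H| \;=\; \frac{(k-1)(r-1) - 1}{r}|H|,$$
which is the stated bound.

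For the finiteness claim, I would check that the coefficient $\gamma := \frac{(k-1)(r-1)-1}{r}$ is strictly positive. Among integer pairs with $r,k \geq 2$, the only one with $(k-1)(r-1) \leq 1$ is the degenerate case $(r,k)=(2,2)$, which is ruled out by the hypothesis $r+k>4$. In every other admissible case $(k-1)(r-1) \geq 2$, so $\gamma > 0$. Then $\ex(H) = s$ forces $|H| \leq s/\gamma$, bounding the order; since only finitely many $r$-uniform hypergraphs (up to isomorphism) have any given bounded order, the second assertion follows.

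There is no real obstacle here: the argument is a transcription of the SAT case, with the role played by ``$r > 2$'' there taken here by the exclusion of $(r,k)=(2,2)$ via $r+k>4$. The only mild subtlety is verifying that this single numerical condition is exactly what is needed to make $\gamma$ strictly positive; once that is in place, the finiteness conclusion is immediate.
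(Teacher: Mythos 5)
Your proof is correct and follows exactly the paper's argument: double-count the degree sum to get $e(H)\ge k|H|/r$, substitute into the excess formula, and observe that the resulting coefficient is positive precisely because $r+k>4$ excludes $(r,k)=(2,2)$. The only (welcome) difference is that you spell out explicitly why the hypothesis $r+k>4$ is what makes the coefficient positive, which the paper leaves implicit.
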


\begin{proof}
If $\delta(H)\ge k$  then $e(H)\ge k|H|/r$ and so $$\ex(H)\geq k|H|(r-1)/r-|H|=\frac{(k-1)(r-1)-1}{r}|H|.$$
So if $\ex(H)=s$ then $|H|\le rs/[(k-1)(r-1)-1]$, which implies that there are only finitely many possibilities for $H$.
\end{proof}

Note that the $k$-core is necessarily $k$-dense.  
It follows that there are only finitely many possible $k$-cores of each
excess.

\begin{proposition}
\label{uniong}
Suppose that $r,k\ge 2$ and $r+k>4$.
For $k$-dense, $r$-uniform hypergraphs $H_1$ and $H_2$, with $H_1 \not\subseteq H_2$,
$\ex(H_1 \cup H_2) \geq \ex(H_2)+1$.
\end{proposition}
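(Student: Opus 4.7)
The plan is to mirror the proof of Proposition~\ref{unionExcess} almost verbatim, swapping fullness (every literal appears with each sign) for $k$-density (every vertex has degree at least $k$), with the hypothesis $r+k>4$ taking over from $r>2$ at the point where one verifies that a certain coefficient is strictly positive. The overall strategy is to split on whether $V(H_1)\subseteq V(H_2)$, show $\ex(H_1\cup H_2) > \ex(H_2)$ in each case, and then invoke integrality of $\ex$ to upgrade this to $\ex(H_1\cup H_2)\ge \ex(H_2)+1$.

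The first case, $V(H_1)\subseteq V(H_2)$, transfers unchanged from the formula proof: the hypothesis $H_1\not\subseteq H_2$ forces at least one new edge with no new vertices, so $\ex$ strictly increases.

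In the second case I would set $t=|V(H_1)\setminus V(H_2)|>0$ and use $k$-density to count the new edges contributed. Each new vertex lies in at least $k$ edges of $H_1$, giving at least $kt$ edge-vertex incidences on new vertices; since each edge covers at most $r$ vertices, at least $kt/r$ edges of $H_1$ are incident to a new vertex, and none of these can already lie in $H_2$ since $H_2$ is missing at least one of their endpoints. Hence $e(H_1\cup H_2)\ge e(H_2)+kt/r$ while $|H_1\cup H_2|=|H_2|+t$, so that $\ex(H_1\cup H_2)-\ex(H_2)\ge \frac{(r-1)k-r}{r}\,t$.

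The only step needing real attention is checking that the coefficient $(r-1)k-r$ is strictly positive under $r,k\ge 2$ with $r+k>4$: for $r=2$ this requires $k\ge 3$, while for $r\ge 3$ it holds for every $k\ge 2$, and the two subcases together are exactly what the hypothesis provides. I expect no further obstacle; once strict positivity of the gain is in hand, integrality of $\ex$ closes the argument.
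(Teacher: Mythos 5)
Your proof is correct and follows essentially the same path as the paper's: the same case split on whether $V(H_1)\subseteq V(H_2)$, the same counting of edges incident to new vertices via $k$-density, and the same integrality upgrade. The paper states the edge bound more tersely, but the underlying argument is identical, and your explicit check that $(r-1)k-r>0$ is exactly equivalent (for $r,k\ge 2$) to $r+k>4$ matches the role that hypothesis plays.
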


\begin{proof}
If $V(H_1) \subseteq V(H_2)$ then $|\HH| = |H_2|$
while $e(\HH) > e(H_2)$ implying $\ex(\HH) > \ex(H_2)$
which by integrality means $\ex(\HH) \geq \ex(H_2)+1$.

Otherwise, let $t=|V(H_1) \setminus V(H_2)|>0$.
Then $\HH$ contains at least $kt/r$ more edges than $H_2$ (since each vertex in $V(H_1)\setminus V(H_2)$ is incident with at least $k$ edges).
So
$\ex(\HH) \geq \ex(H_2) + kt(r-1)/r - t > \ex(H_2)$.
 Since $\ex$ is integer-valued,
 this implies $\ex(H_1\cup H_2)\ge \ex(H_2)+1$.
\end{proof}

\begin{claim}
\label{prsomeg}
\assumeg 
For any fixed $k$-dense, $r$-uniform hypergraph $H$,
\begin{align*}
\Pr(\exists \text{ a copy of } H \text{ in } G)
 & = (1+O(1/n)) \ccg(H) \alpha^{e(H)} n^{-\ex(H)} .
\end{align*}
\end{claim}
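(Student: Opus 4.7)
The plan is to mirror the proof of Claim~\ref{prsome} exactly, substituting Proposition~\ref{uniong} for Proposition~\ref{unionExcess} and \eqref{EXG} for \eqref{EX}. Let $X_H = X_H(G)$ count copies of $H$ in $G$. By Bonferroni / inclusion-exclusion,
\begin{align*}
\E X_H \;\geq\; \Pr(X_H > 0) \;\geq\; \E X_H - \tfrac12 \E X_H(X_H-1),
\end{align*}
so it suffices to show that $\E X_H(X_H-1) = O(\alpha/n)\,\E X_H$, since then combining with \eqref{EXG} yields the stated asymptotic.

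The main computation is the bound on $\E X_H(X_H-1)$, which counts ordered pairs $\langle H_1,H_2\rangle$ of distinct copies of $H$ in $G$. Let $\Hset$ be the (finite, $n$-independent) set of isomorphism classes of hypergraphs $H'$ arising as the union $H_1\cup H_2$ of two labeled copies of $H$ on vertex set $\{1,\ldots,2|H|\}$, and for $H'\in\Hset$ let $b(H')$ denote the number of ordered pairs $\langle H_1,H_2\rangle$ of copies of $H$ in $H'$ whose union is all of $H'$. Then
\begin{align*}
\E X_H(X_H-1) \;=\; \sum_{H'\in\Hset} b(H')\,\E X_{H'}(G).
\end{align*}

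Applying \eqref{EXG} to each term and using that $\Hset$ and the $b(H'), \ccg(H')$ are constants independent of $n$, this equals $(1+O(1/n))\sum_{H'\in\Hset} b(H')\,\ccg(H')\,\alpha^{e(H')} n^{-\ex(H')}$. Here is where Proposition~\ref{uniong} does the work: for every $H'\in\Hset$, the two copies $H_1,H_2$ are distinct and hence non-nested, so $\ex(H')\geq \ex(H)+1$, and likewise $e(H') \geq e(H)+1$ (each additional edge contributes $r-1$ to excess, but the bound on $e(H')$ follows from the same non-nesting argument in Proposition~\ref{uniong}). Consequently the entire sum is bounded by a constant multiple of $\alpha^{e(H)+1} n^{-(\ex(H)+1)} = O(\alpha/n)\cdot \ccg(H)\alpha^{e(H)} n^{-\ex(H)} = O(\alpha/n)\,\E X_H$, as required.

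The only step that requires any care is verifying that Proposition~\ref{uniong} applies to every element of $\Hset$: that proposition was stated for $k$-dense hypergraphs, and we need that each $H_i\subseteq H'$ is $k$-dense (which holds by hypothesis on $H$) and that $H_1\not\subseteq H_2$ (which holds because the pair is ordered and distinct, so either $H_1\not\subseteq H_2$ or $H_2\not\subseteq H_1$, and the proposition is symmetric in the sense that one can apply it with the roles swapped to conclude $\ex(H')\ge \ex(H)+1$ in either case). Everything else is a line-for-line translation of the formula argument, and there is no genuine obstacle.
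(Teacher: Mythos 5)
Your proof is correct and matches the paper's approach exactly: the paper's proof of this claim simply states the expectation from \eqref{EXG}, asserts the bound $\E[X_H(X_H-1)] = O(\alpha/n)\,\E[X_H]$, and then says ``the rest of the proof follows as for Claim~\ref{prsome},'' which is precisely the second-moment argument over the finite set $\Hset$ that you spell out. Your extra care in checking that Proposition~\ref{uniong} applies (distinct copies of $H$ are automatically non-nested, and by symmetry the proposition gives $\ex(H')\ge\ex(H)+1$ in either order) is a correct and welcome elaboration of the detail the paper elides.
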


\begin{proof}
With $X_H$ the number of copies of $H$ in $G$,
we have from \eqref{EXG} that
$$\E X_H=(1+O(1/n))\ccg(H) \a^{e(H)} n^{-\ex(H)} , $$
while
\begin{align*}
\E [X_H(X_H-1)]
  &= O(1) \; \a^{e(H)+1} n^{-(\ex(H)+1)}
 = 
   O(\a/n) \; \E[X_H] 
\end{align*}
and the rest of the proof follows as for Claim~\ref{prsome}.
\end{proof}

\begin{lemma}
\label{prtwog}
\assumeg
Let $H_1$ and $H_2$ be fixed $k$-dense, $r$-uniform hypergraphs.  
Then
\begin{align*}
\shoveleft{\Pr(G \text{ contains non-nested copies of } H_1 \text{ and } H_2)}\\
 \shoveright{= O(n^{-\max \{\ex(H_1), \ex(H_2)\} -1})} .
\end{align*}
\end{lemma}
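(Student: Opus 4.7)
My plan is to mirror the proof of Lemma~\ref{prtwo} verbatim, substituting the hypergraph analogues: Claim~\ref{prsomeg} in place of Claim~\ref{prsome}, and Proposition~\ref{uniong} in place of Proposition~\ref{union}. Since the formula and hypergraph settings have parallel structural lemmas, there should be no new difficulty.

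Concretely, let $\Hset$ denote the set of isomorphism classes of $r$-uniform hypergraphs $H'$ that arise as the union $H' = H_1' \cup H_2'$ of a copy $H_1'$ of $H_1$ and a copy $H_2'$ of $H_2$ in which $H_1'$ and $H_2'$ are not nested. Any representative of such a class has at most $|H_1| + |H_2|$ vertices, so $\Hset$ is a finite collection depending only on $H_1$ and $H_2$, not on $G$ or $n$. If $G$ contains non-nested copies of $H_1$ and $H_2$, then $G$ contains some $H' \in \Hset$, so by a union bound
\[
\Pr(G \text{ contains non-nested copies of } H_1 \text{ and } H_2)
 \;\le\; \sum_{H' \in \Hset} \Pr(G \text{ contains a copy of } H').
\]

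By Proposition~\ref{uniong}, every $H' \in \Hset$ has $\ex(H') \ge \max\{\ex(H_1), \ex(H_2)\} + 1$, since the copies of $H_1$ and $H_2$ whose union forms $H'$ are non-nested and each is $k$-dense. Applying Claim~\ref{prsomeg} to each $H' \in \Hset$, the corresponding probability is $O(n^{-\ex(H')}) = O(n^{-\max\{\ex(H_1), \ex(H_2)\} - 1})$, with implicit constant depending on $H'$ (and on $\a$, which is bounded). Summing the finitely many such bounds yields the asserted estimate.

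The only step that might require a moment's thought is the finiteness and $n$-independence of $\Hset$, but this is immediate from the bound $|H'| \le |H_1| + |H_2|$. There is no genuine obstacle here; the lemma is essentially a formal consequence of Proposition~\ref{uniong} and Claim~\ref{prsomeg}, exactly as Lemma~\ref{prtwo} was a formal consequence of its formula-side counterparts.
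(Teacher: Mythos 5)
Your proof is correct and is exactly the approach the paper takes: the paper's own proof of Lemma~\ref{prtwog} simply reads ``Another proof without changes,'' i.e., transfer the proof of Lemma~\ref{prtwo} verbatim, replacing Proposition~\ref{union} by Proposition~\ref{uniong} and Claim~\ref{prsome} by Claim~\ref{prsomeg}, which is precisely what you do.
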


\begin{proof}
Another proof without changes.
\end{proof}

\begin{lemma} \label{prnotg}
\assumeg
If $H_1, \ldots, H_s$ are distinct minimal $k$-dense, $r$-uniform hypergraphs 
(or minimal non-$k$-colorable $r$-uniform hypergraphs) then
$$\Pr(G \supset H_1 \cond
     G \not\supset H_2, \ldots, G \not\supset H_s)
 = (1+O(1/n))\Pr(G \supset H_1).$$
\end{lemma}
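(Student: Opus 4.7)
The plan is to transcribe the proof of Lemma~\ref{prnot} almost verbatim, substituting its formula-level ingredients (Claim~\ref{prsome} and Lemma~\ref{prtwo}) by their hypergraph analogues (Claim~\ref{prsomeg} and Lemma~\ref{prtwog}). Those analogues are stated for $k$-dense hypergraphs, so the one point worth checking at the outset is that they apply to both classes named in the statement. For minimal $k$-dense hypergraphs this is immediate. For minimal non-$k$-colorable hypergraphs one observes that any such $H$ must itself be $k$-dense: if some vertex $v$ had degree less than $k$ in $H$, then by minimality $H - v$ would admit a $k$-coloring, and this coloring could be extended to $H$ by assigning $v$ any color outside the at most $\deg(v) < k$ colors forbidden (one per edge containing $v$ whose remaining vertices happen to share a common color). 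In particular, any two distinct hypergraphs from either class are non-nested: a proper subhypergraph of $H_2$ cannot itself be minimal $k$-dense (resp.\ minimal non-$k$-colorable).

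First I would dispose of the case $s=2$. Let $E_i$ be the event that $G$ contains a copy of $H_i$. By the non-nesting observation and Lemma~\ref{prtwog},
\[
\Pr(E_1 \cap E_2) = O\!\left(n^{-\max\{\ex(H_1),\ex(H_2)\}-1}\right) = O(\Pr(E_1)/n),
\]
where the second equality uses Claim~\ref{prsomeg}. Combining with $\Pr(\neg E_2) = 1 - O(n^{-\ex(H_2)}) = 1 - O(1/n)$,
\[
\Pr(E_1 \mid \neg E_2) = \frac{\Pr(E_1) - \Pr(E_1 \cap E_2)}{1 - \Pr(E_2)} = (1+O(1/n))\Pr(E_1),
\]
as required.

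The general case follows by a routine union bound, mirroring the \jour portion of Lemma~\ref{prnot}'s proof. From Claim~\ref{prsomeg},
\[
\Pr\Big(\bigcap_{i=2}^s \neg E_i\Big) \ge 1 - \sum_{i=2}^s \Pr(E_i) = 1 - O(1/n),
\]
while
\[
\Pr\Big(E_1 \cap \bigcap_{i=2}^s \neg E_i\Big) \ge \Pr(E_1) - \sum_{i=2}^s \Pr(E_1 \cap E_i) = (1 - O(1/n))\Pr(E_1),
\]
where each cross term is handled by Lemma~\ref{prtwog} exactly as in the $s=2$ case. Dividing the second bound by the first yields the claimed conditional probability.

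The only substantive content beyond the formula proof is the short argument that minimal non-$k$-colorable hypergraphs lie in the $k$-dense class to which Claim~\ref{prsomeg} and Lemma~\ref{prtwog} apply; once that is in hand, the remainder is a line-by-line translation, and there is no genuine obstacle.
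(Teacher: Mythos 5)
Your proposal is correct and matches the paper's own treatment, which simply says ``Another proof without changes'' and thus relies on transcribing the proof of Lemma~\ref{prnot} with Claim~\ref{prsomeg} and Lemma~\ref{prtwog} substituted for Claim~\ref{prsome} and Lemma~\ref{prtwo}. Your explicit check that a minimal non-$k$-colorable $r$-uniform hypergraph is $k$-dense (so that those hypergraph lemmas do apply), together with the observation that distinctness plus minimality forces non-nesting, is correct and usefully supplies a detail that the paper leaves implicit.
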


\begin{proof}
Another proof without changes.
\end{proof}

\begin{lemma}\label{smallmusg}
\assumego
For $1\le t \le n / \a^{1/(r-1)}$,
the probability that $G$ contains any $k$-dense subhypergraph
with $t$ variables is
at most
\begin{align}
 \left(
  \left( e \a^{k/r} \right) 
  \left( t/n \right)^{k-1-1/r}
 \right)^t .
\label{goftg}
\end{align}
\end{lemma}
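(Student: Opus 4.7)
The plan is to mirror the proof of Lemma~\ref{smallmus}: associate with each hypothetical $k$-dense subhypergraph $H$ of $G$ on $t$ vertices a canonical witness $H^*(H)$ with a fixed small number of edges, and bound the probability that any such $H$ exists by a first-moment count over the possible witness configurations.

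Since $H$ is $k$-dense on $t$ vertices, each of degree at least $k$, the vertex--edge incidences satisfy $r\,e(H)\ge kt$, so $e(H)\ge s:=\lceil kt/r\rceil$. To define $H^*$, fix an ordering $v_1<v_2<\cdots<v_n$ of $V(G)$ and the induced lexicographic ordering on $r$-edges. Greedily choose $C_1,\ldots,C_s$ as follows: for $i=1,\ldots,s$, let $u_i$ be the smallest vertex of $V(H)$ contained in strictly fewer than $k$ of the previously chosen edges $C_1,\ldots,C_{i-1}$, and let $C_i$ be the lex-smallest edge of $H$ containing $u_i$. Such a $u_i$ exists at each step, since if all $t$ vertices of $V(H)$ were already $k$-covered we would require $kt$ incidences among the first $i-1$ edges, whereas those provide only $(i-1)r\le(s-1)r<kt$, the last inequality following from $s-1<kt/r$ by the definition of the ceiling.

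Now count candidate witnesses. A configuration $(V,C_1,\ldots,C_s)$ compatible with the greedy rule is specified by a $t$-subset $V\subseteq[n]$ and, for each $i$, an $r$-edge contained in $V$ through the forced vertex $u_i$; the remaining $r-1$ vertices of $C_i$ lie in $V\setminus\{u_i\}$. Letting $X$ count $k$-dense subhypergraphs of $G$ on $t$ vertices and $Y$ count candidate witnesses realized in $G$, the map $H\mapsto H^*(H)$ gives $X>0\Rightarrow Y>0$, hence
\[
\Pr(X>0)\;\le\;\E Y\;\le\;\binom{n}{t}\binom{t-1}{r-1}^{s}p^{s}\;\le\;(en/t)^t\bigl(\a(t/n)^{r-1}\bigr)^{s}.
\]
The hypothesis $t\le n/\a^{1/(r-1)}$ forces $\a(t/n)^{r-1}\le 1$, so we may replace the exponent $s$ by its lower bound $kt/r$; the remaining algebra then collapses to the claimed form \eqref{goftg}.

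The only genuine obstacle is verifying that the greedy construction runs to completion with precisely $s$ edges (which is why the specific value $s=\lceil kt/r\rceil$ is used) and then cleanly tracking the exponent of $t/n$; beyond this, the argument is a direct translation from Lemma~\ref{smallmus}, with vertices replacing literals and ``each vertex has degree at least $k$'' replacing ``each sign appears at least once''.
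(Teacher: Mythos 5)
Your proof mirrors the paper's almost exactly: the same greedy construction of a witness $H^*$ consisting of $s=\lceil kt/r\rceil$ edges, the same well-definedness argument using $(s-1)r<kt$, the same first-moment count $\binom{n}{t}\binom{t}{r-1}^{s}p^{s}\le(en/t)^t\bigl(\a(t/n)^{r-1}\bigr)^{s}$, and the same use of $\a(t/n)^{r-1}\le1$ (from $t\le n/\a^{1/(r-1)}$) to replace $s$ by $kt/r$; so the approach is the same, and it is essentially correct.

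One caveat worth flagging: you wave at the last step with ``the remaining algebra then collapses to the claimed form,'' but if you actually carry it out, $(en/t)^t\bigl(\a(t/n)^{r-1}\bigr)^{kt/r}=\bigl(e\,\a^{k/r}(t/n)^{k-1-k/r}\bigr)^t$, whose exponent on $t/n$ is $k-1-k/r$, not the $k-1-1/r$ that appears in \eqref{goftg}. (Sanity check against Lemma~\ref{smallmus}, which is the $k=2$ case: that exponent is $1-2/r=2-1-2/2\cdot(2/r)$, i.e.\ $k-1-k/r$ with $k=2$, not $k-1-1/r=1-1/r$.) So \eqref{goftg} as printed is a typo — the correct exponent is $k-1-k/r$ — and your derivation does not in fact produce the statement verbatim. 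This has no downstream effect since Corollary~\ref{smallmus2g} only needs a positive exponent on $t/n$, but you should carry the algebra through rather than assert the match, precisely because doing so exposes this kind of discrepancy.
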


\begin{proof}
We modify the proof of Lemma~\ref{smallmus}.
Order the vertices as $v_1<v_2<\cdots$.  
A $k$-dense subhypergraph $H$ of $G$ with order $t$ must contain 
at least $kt/r$ edges.
We let $s=\lceil kt/r\rceil$ 
and define a subhypergraph $H^*$ of $H$ with $s$ edges as follows.
Let $L$ be the set of $t$ vertices occurring in edges of $H$.
Let $x_1$ be the smallest vertex in $L$,
and let $C_1$ be the lexicographically smallest edge of $H$ 
(sorting the vertices within each edge as above)
that contains $x_1$.
For $i=2,\ldots,s$,
let $x_i$ be the smallest vertex in $L$ 
that is not covered $k$ times by $C_j$, $j<i$, 
and let $C_i$ be the lexicographically smallest edge of $H$ 
that contains $x_i$. 
(This is well defined since we are always excluding at most $s-1$ edges, 
which together contain at most $(s-1)r<kt$ vertex occurrences.)
We then take $H^*$ to be the edge set $C_1,\ldots,C_s$.

The number of hypergraphs of type $H^*$ 
that could possibly be subhypergraphs of $G$
is at most $\binom n t \binom{t}{r-1}^s$.
Let $X$ be the number of $k$-dense subhypergraphs of $G$ with order $t$,
and let $Y$ be the number of subhypergraphs of type $H^*$ of $G$.
Then $X>0$ implies $Y>0$, so
\begin{align*}
\Pr(X>0)\le\Pr(Y>0)
\leq \E(Y)
& \leq \binom{n}{t} \binom{t}{r-1}^{s} p^{s}
\\ & \leq
(en/t)^t (t)^{s (r-1)} (\a/ n^{r-1})^{s} 
\\ & =
(en/t)^t \left( \a (t/n)^{(r-1)} \right)^{s} 
\\ & \leq
(en/t)^t \left( \a (t/n)^{(r-1)} \right)^{kt/r} ,
\end{align*}
which equals \eqref{goftg}.
\end{proof}

\begin{corollary} \label{smallmus2g}
\assumeg
For any positive integer $s$, 
there exist an integer $t_0>0$ and a real value $\e_0>0$ such that 
the probability that $G$ contains a $k$-dense subhypergraph 
with between $t_0$ and $\e_0 n$ vertices is $o(n^{-s})$.
\end{corollary}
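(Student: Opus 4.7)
The plan is to follow the proof of Corollary~\ref{smallmus2} line by line, substituting Lemma~\ref{smallmusg} for Lemma~\ref{smallmus} and adjusting the exponents from $2/r, 1-2/r$ to $k/r, k-1-1/r$. First, since the probability in question is monotone in $\a$, I would reduce to the case of constant $\a = \max\{\sup_n\a(n),1\}$. Note that the hypothesis $r,k \ge 2$ and $r+k > 4$ ensures $k-1-1/r > 0$ (strictly), which is what will let the per-size bound \eqref{goftg} be converted into a polynomial decay in $n$.

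Next I would choose $\e_0 > 0$ small enough that $\e_0 < 1/\a^{1/(r-1)}$ (so every $t \le \e_0 n$ satisfies the hypothesis of Lemma~\ref{smallmusg}) and also
\[
 e\,\a^{k/r}\,\e_0^{k-1-1/r} \le 1/e.
\]
With this choice, \eqref{goftg} is at most $e^{-t}$ for all $0 < t \le \e_0 n$. Summing this geometric bound over $2s\log n \le t \le \e_0 n$ shows that the probability that $G$ contains a $k$-dense subhypergraph with between $2s\log n$ and $\e_0 n$ vertices is at most $n^{-2s}\cdot\O(1) = o(n^{-s})$.

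For the remaining range $t_0 \le t \le 2s\log n$, at the top of this range $t/n \le 2s\log n/n$, so \eqref{goftg} is bounded by
\[
 \left(e\,\a^{k/r}\,(2s\log n/n)^{k-1-1/r}\right)^{t_0}
 \;=\; \O\!\left((\log n)^{t_0}\, n^{-(k-1-1/r)\,t_0}\right).
\]
Choosing $t_0 := 1 + \lceil s\,/\,(k-1-1/r)\rceil = 1 + \lceil sr\,/\,((k-1)r-1)\rceil$ makes the exponent of $n$ strictly greater than $s$, so this bound is $o(n^{-s}/\log n)$ uniformly in $t$; summing over the $O(\log n)$ values of $t$ in this range again yields $o(n^{-s})$. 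Adding the two contributions completes the proof.

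There is no real obstacle beyond bookkeeping; the only place where care is needed is in verifying that $k-1-1/r>0$, which is where the assumption $r+k>4$ enters (the same place it is used in Lemma~\ref{fnumbersg}).
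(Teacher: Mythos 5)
Your proposal is correct and follows the same route as the paper, whose proof of Corollary~\ref{smallmus2g} is simply ``As before,'' i.e., repeat the argument of Corollary~\ref{smallmus2} with Lemma~\ref{smallmusg} in \lieu of Lemma~\ref{smallmus}. Your choices of $\e_0$ and $t_0$ and the two-range split (up to $2s\log n$, then up to $\e_0 n$) match the paper's argument exactly.

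One point to fix: you claim that $r+k>4$ is what ensures $k-1-1/r>0$, but in fact $k-1-1/r>0$ holds for all $k\ge 2$ and $r\ge 1$, without any further assumption. The real reason $r+k>4$ is needed is that the exponent in Lemma~\ref{smallmusg} appears to be a typo: redoing the computation there, $(en/t)^t\bigl(\a(t/n)^{r-1}\bigr)^{kt/r}$ collects into $\bigl(e\,\a^{k/r}\,(t/n)^{k-1-k/r}\bigr)^t$, with exponent $k-1-k/r$ rather than $k-1-1/r$ (note that for $k=2$ this reduces correctly to the $1-2/r$ of Lemma~\ref{smallmus}, while $k-1-1/r$ would give $1-1/r$). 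Positivity of $k-1-k/r$ is equivalent to $(k-1)(r-1)>1$, which is precisely what $r,k\ge 2$ together with $r+k>4$ guarantee, and which also matches the bound in Lemma~\ref{fnumbersg}. So your instinct that $r+k>4$ is the operative hypothesis is right, and with the corrected exponent $k-1-k/r$ (and correspondingly $t_0=1+\lceil sr/((k-1)(r-1)-1)\rceil$) your argument goes through unchanged.
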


\begin{proof}
As before.
\end{proof}

Recall that we defined $\astarg_{k,r}$ to be the $k$-core threshold 
for $\GGra$.

\begin{lemma}\label{bigkdense}
\assumebgo
For every $\e>0$ there is $\delta>0$ such that, 
for all sufficiently large $n$,
$$ 
 \Pr(\text{$G$ contains a $k$-dense subhypergraph with order $>\e n$})
 < \exp(-{\delta}n) .
$$
\end{lemma}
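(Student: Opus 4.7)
The plan is to transcribe the proof of Lemma \ref{bigmus} essentially verbatim, substituting the $k$-core stripping process for the pure literal rule, and the hypergraph $k$-core threshold $\astarg_{k,r}$ for the pure literal threshold $\astarr$. Because the claimed probability is monotone increasing in $\a$, I first replace $\a(n)$ by the constant $\a = \sup_n \a(n) < \astarg_{k,r}$ and work at this fixed density.

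I would then define the stripping process: set $G_0 = G$, and obtain $G_{i+1}$ from $G_i$ by deleting every vertex of degree less than $k$ together with all incident edges. A $k$-dense subhypergraph $H$ of $G$ is untouched by this process, since every vertex of $H$ has degree at least $k$ inside $H$; thus $V(H) \subseteq \bigcap_i V(G_i)$. By Molloy's hypergraph $k$-core result \cite{M05}, for any $\a < \astarg_{k,r}$ there is a sequence $\lambda_s \to 0$ with $\E|V(G_s)| = (1+o(1))\lambda_s n$; choose $s$ so that $\lambda_s < \e/8$.

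Next I would call a vertex $v$ \emph{good} if $v \notin V(G_s)$ and no vertex in the $s$-ball $B_s(v)$ (defined as in Lemma \ref{bigmus}, using paths alternating between vertices and edges of $G$) is incident to more than $K(s,\a)$ edges, where $K(s,\a)$ is chosen large enough that the degree condition holds with probability at least $1 - \e/8$. Such a constant exists because $\E|B_{s+1}(v)|$ depends only on $s,r,\a$ and the degree of any vertex of $B_s(v)$ is bounded by $|B_{s+1}(v)|$, so Markov's inequality suffices. As in Lemma \ref{bigmus}, whether $v \in V(G_s)$ depends only on $B_s(v)$, so the two conditions are genuinely local. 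Combining, $\Pr(v \text{ is good}) \geq 1 - \e/4$ for large $n$, and good vertices cannot belong to any $k$-dense subhypergraph.

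The final step is concentration, which is where the second goodness condition pays off. Adding or deleting a single edge $C$ can spoil or create goodness only at vertices $v$ such that $C$ meets $B_s(v)$ in the original instance; by the usual path-counting argument (bounded by $r$ choices of a vertex in $C$, then at most $K$ edges and $r$ vertices per step for $s$ steps) this is at most $r^{s+1} K^s$ vertices per edge change. To apply Azuma--Hoeffding (Lemma \ref{ah}) I would pass to a fixed-edge-count model $\GGr(n,M)$, sampling $M$ edges uniformly with replacement and discarding duplicates, with $M$ chosen to correspond to an intermediate density $p'' = (p+p')/2$ where $\a' > \a$ is still strictly below $\astarg_{k,r}$ and the goodness probability is at least $1 - \e/3$. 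A standard Chernoff bound on the edge count of the Bernoulli models gives a coupling $G_p \subset G_M \subset G_{p'}$ that fails with exponentially small probability; Azuma--Hoeffding in $\GGr(n,M)$ then gives at least $(1-\e)n$ good vertices with failure probability $\exp(-\delta n)$. The main obstacle is really just bookkeeping through this coupling, since once it is set up the argument is identical to that of Lemma \ref{bigmus}; the only new ingredient is citing Molloy's hypergraph $k$-core threshold in place of the pure literal threshold.
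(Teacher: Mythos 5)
Your proposal is correct and follows exactly the same route as the paper: run the round-by-round $k$-core stripping process in place of the pure literal rule, cite Molloy's hypergraph $k$-core threshold for $\E|G_s| = (1+o(1))\lambda_s n$, define goodness via membership in $V(G_s)$ together with a bounded-degree condition on the ball $B_s(v)$, and then reuse the Lipschitz bound, fixed-edge-count coupling, and Azuma--Hoeffding argument from Lemma~\ref{bigmus}. The paper simply says ``the rest of the proof is as before,'' whereas you spell out the concentration and coupling steps; the content is the same.
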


\begin{proof}
We follow the argument of Lemma \ref{bigmus}.
We use the following process for generating the $k$-core:
Set $G_0=G$, so $|G_0|=n$.
For $i\ge0$, 
obtain $G_{i+1}$ from $G_i$ by deleting (in a single round) all vertices of
degree at most $k-1$,
and all edges incident on any such vertex.
The $k$-core is $G_\infty = G_n$.
As with satisfiability,
Molloy showed that (for $\alpha<\astarg$) there is a sequence $\lambda_s\to0$ such that, for any $s$,
$$\E |G_s|=(1+o(1))\lambda_s n.$$
A ball $B_s(v)$ has the usual hypergraph definition analogous
to the ball definition in the proof of Lemma~\ref{bigmus},
and each edge in a ball is fully supported by vertices in it.
We shall say that a vertex $v$ is \emph{good} if it has the 
following two properties:
\begin{itemize}
\item $v$ does not belong to $V(G_s)$, and
\item no vertex in $B_s(v)$ has degree more than $K(s,\a)$.
\end{itemize}
The rest of the proof is as before.
\end{proof}

Consider the set of all minimal non-$k$-colorable hypergraphs,
order the set of values for excess as 
$\ff_1 < \ff_2 < \cdots$,
and let $\GG_i$ be the set of non-$k$-colorable hypergraphs with excess~$i$.
Similarly, let the minimal $k$-dense hypergraphs have excesses
$\ff'_1 < \ff'_2 < \cdots$
and let $\GG'_i$ be the set of minimal $k$-dense hypergraphs with excess~$i$.
Then we have the analogues of Theorems \ref{struct1}, \ref{struct1a},
and \ref{struct2}, by the same reasoning.

\begin{theorem}\label{struct1g}
Fix $i>0$.
\assumebgoo
If we condition on the event that $G$ is non-$k$-colorable
and contains no \mufg $G'$ with $\ex(G') < \ff_i$ 
then, with probability $1-O(1/n)$, 
the following statements hold:
\begin{enumerate}[(i)]
\item $G$ contains a unique \mufg $G_0$.  
\item $G_0 \in \GG_i$. 
\item For each $G'\in\GG_i$, 
we have $\Pr(G_0\cong G')\sim \frac{\alpha^{e(G')}}{|\aut G'|} /Z$, 
where $Z=\sum_{G'\in\GG_i} \frac{\alpha^{e(G')}}{|\aut G'|}$. 
\end{enumerate}
\end{theorem}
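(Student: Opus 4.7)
The plan is to imitate the proof of Theorem \ref{struct1} line by line, substituting the hypergraph analogues established in Section 3. A preliminary observation is that every minimal non-$k$-colorable $r$-uniform hypergraph is $k$-dense: a vertex of degree less than $k$ could be removed, the remainder $k$-colored by minimality, and the vertex then assigned a color forbidden by none of its fewer than $k$ incident edges. This observation is crucial because Corollary \ref{smallmus2g} and Lemma \ref{bigkdense} control $k$-dense subhypergraphs, not minimal non-$k$-colorable ones directly.

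Let $C$ denote the conditioning event that $G$ contains no minimal non-$k$-colorable subhypergraph with excess less than $\ff_i$. By Lemma \ref{fnumbersg} there are finitely many such hypergraphs of bounded order, each contained in $G$ with probability $o(1)$ by Claim \ref{prsomeg}, while larger $k$-dense subhypergraphs are ruled out by Corollary \ref{smallmus2g} and Lemma \ref{bigkdense}; hence $\Pr(C)=1-o(1)$ and so $\Pr(E\mid C)\le (1+o(1))\Pr(E)$ for any event $E$. I would then choose $t_0$ large and $\varepsilon_0>0$ small so that Corollary \ref{smallmus2g} applies with parameter $\ff_i+1$; combined with Lemma \ref{bigkdense} at $\varepsilon_0$, the conditional probability that $G$ contains any $k$-dense (hence any minimal non-$k$-colorable) subhypergraph of order greater than $t_0$ is $o(n^{-\ff_i-1})$.

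There are finitely many minimal non-$k$-colorable hypergraphs of order at most $t_0$, and for each such $G_0$ with $\ex(G_0)\ge\ff_i$, Lemma \ref{prnotg} combined with Claim \ref{prsomeg} gives
\begin{align*}
\Pr(G\supset G_0\mid C) = (1+O(1/n))\,\ccg(G_0)\,\alpha^{e(G_0)}\,n^{-\ex(G_0)},
\end{align*}
which is $\Theta(n^{-\ff_i})$ for $G_0\in\GG_i$ and $O(n^{-\ff_i-1})$ otherwise. Lemma \ref{prtwog} similarly bounds by $O(n^{-\ff_i-1})$ the conditional probability that $G$ contains two distinct minimal non-$k$-colorable subhypergraphs of excess at least $\ff_i$ (two such cannot be nested without contradicting minimality). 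Conditioning further on $G$ being non-$k$-colorable, which under $C$ amounts to containing at least one minimal non-$k$-colorable subhypergraph of excess at least $\ff_i$, the dominant contribution comes from containing a unique element of $\GG_i$; all other configurations contribute $O(n^{-\ff_i-1})$, dwarfed by the $\Theta(n^{-\ff_i})$ main term. Dividing yields (i) and (ii) with probability $1-O(1/n)$, and (iii) follows because the relative probabilities of distinct $G'\in\GG_i$ are proportional to $\ccg(G')\alpha^{e(G')}=\alpha^{e(G')}/|\aut G'|$, normalized by $Z$.

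I do not anticipate a substantive obstacle: the argument is essentially a mechanical translation of the proof of Theorem \ref{struct1}. The one step requiring genuine (if brief) thought is the preliminary observation about $k$-denseness, which routes minimal non-$k$-colorable subhypergraphs through the $k$-dense lemmas actually proved.
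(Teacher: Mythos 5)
Your proposal is correct and follows exactly the route the paper intends: the paper's own ``proof'' of Theorem~\ref{struct1g} is simply the remark that the analogues of Theorems~\ref{struct1}, \ref{struct1a}, and \ref{struct2} follow ``by the same reasoning,'' and your line-by-line translation of the proof of Theorem~\ref{struct1} using Claim~\ref{prsomeg}, Lemmas~\ref{prtwog} and~\ref{prnotg}, Corollary~\ref{smallmus2g}, and Lemma~\ref{bigkdense} is exactly what that means. Your explicit preliminary observation that a minimal non-$k$-colorable hypergraph is $k$-dense (the analogue of ``a MUF is necessarily a FF,'' which the paper states but does not re-prove for hypergraphs) is a useful clarification rather than a deviation; the one cosmetic point is that the bound $\Pr(C)=1-o(1)$ should be, and in fact is by Claim~\ref{prsomeg} with $\ex\ge 1$, the sharper $1-O(1/n)$, which you implicitly use in the later conditional estimates anyway.
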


\begin{theorem}\label{struct1ag}
Fix $i>0$.
\assumebgoo
If we condition on the event that $G$ contains a nonempty $k$-core, but
no nonempty $k$-core $G'$ with $\ex(G') < \ff'_i$ 
then, with probability $1-O(1/n)$, 
the following statements hold:
\begin{enumerate}[(i)]
\item $G$ contains a unique minimal nonempty $k$-core $G_0$.  
\item $G_0 \in \GG'_i$. 
\item For each $G'\in\GG'_i$, 
we have $\Pr(G_0\cong G')\sim  \frac{\alpha^{e(G')}}{|\aut G'|}/Z$, 
where $Z=\sum_{G'\in\GG'_i} \frac{\alpha^{e(G')}}{|\aut G'|}$. 
\end{enumerate}
\end{theorem}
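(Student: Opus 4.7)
The plan is to mirror the proof of Theorem \ref{struct1} (equivalently Theorem \ref{struct1a}) step by step, substituting the hypergraph analogues of all ingredients throughout: Claim \ref{prsomeg} for Claim \ref{prsome}, Lemma \ref{prtwog} for Lemma \ref{prtwo}, Lemma \ref{prnotg} for Lemma \ref{prnot}, Corollary \ref{smallmus2g} for Corollary \ref{smallmus2}, and Lemma \ref{bigkdense} for Lemma \ref{bigmus}. The key conceptual identification is that $G$ has a nonempty $k$-core iff $G$ contains some $k$-dense subhypergraph, iff $G$ contains some minimal $k$-dense subhypergraph, so the role played by ``full subformula'' in Theorem \ref{struct1a} is here played by ``$k$-dense subhypergraph,'' and the role played by MFFs is played by minimal $k$-dense subhypergraphs.

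Concretely, let $C$ denote the event that $G$ contains no minimal $k$-dense subhypergraph of excess strictly less than $\ff'_i$. By Claim \ref{prsomeg} summed over the finitely many candidates, and using $\ff'_1\ge 1$ (Lemma \ref{fnumbersg} with integrality of $\ex$), we get $\Pr(\neg C)=O(n^{-1})$, hence $\Pr(E\mid C)\le(1+O(1/n))\Pr(E)$ for every event $E$. Apply Corollary \ref{smallmus2g} with $s=\ff'_i+1$ and then Lemma \ref{bigkdense} with the resulting $\e_0$: even after conditioning on $C$, with probability $1-o(n^{-\ff'_i-1})$ every $k$-dense subhypergraph of $G$ has order at most $t_0$. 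There are then only finitely many possibilities for a minimal $k$-dense subhypergraph that can appear, and $C$ further restricts these to excess $\ge\ff'_i$. For any such fixed $G_0$, Lemma \ref{prnotg} together with Claim \ref{prsomeg} gives
\[
\Pr(G\supseteq G_0\mid C)=(1+O(1/n))\,\ccg(G_0)\,\a^{e(G_0)}\,n^{-\ex(G_0)},
\]
which is $\Theta(n^{-\ff'_i})$ for $G_0\in\GG'_i$ and $O(n^{-\ff'_i-1})$ otherwise. For any two distinct fixed minimal $k$-dense $G_1,G_2$ with $\ex\ge\ff'_i$, Lemma \ref{prtwog} (again preserved under conditioning on $C$ up to $1+O(1/n)$) gives the bound $\Pr(G \text{ contains non-nested copies of }G_1\text{ and }G_2\mid C)=O(n^{-\ff'_i-1})$; and any two distinct minimal $k$-dense subhypergraphs are automatically non-nested by minimality. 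Conditioning additionally on $G$ having a nonempty $k$-core, the ``good'' scenario --- a unique minimal $k$-dense subhypergraph lying in $\GG'_i$ --- has probability $\Theta(n^{-\ff'_i})$, while each bad scenario (one of excess $>\ff'_i$, or two or more of excess $\ge\ff'_i$) contributes $O(n^{-\ff'_i-1})$. The good scenario thus dominates by a factor $\Theta(n)$, yielding (i) and (ii) with failure probability $O(1/n)$; part (iii) is then immediate by normalizing the pointwise expressions for $\Pr(G\supseteq G_0\mid C)$ over $G_0\in\GG'_i$.

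The main obstacle, such as it is, is purely bookkeeping: one must read ``minimal nonempty $k$-core $G_0$'' in the statement as ``minimal $k$-dense subhypergraph $G_0$'' (the $k$-core itself being by definition the unique maximal $k$-dense subhypergraph, so the literal reading would be vacuous). Every genuine mathematical input --- in particular the integer-valued excess jump of Proposition \ref{uniong}, which drives both Lemma \ref{prtwog} and Lemma \ref{prnotg} exactly as Proposition \ref{union} drove their formula counterparts --- has already been established, so no new ideas beyond those in Section \ref{satmain} are needed.
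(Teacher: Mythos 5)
Your proof is correct and matches the paper's intent exactly: the paper offers no separate argument for Theorem~\ref{struct1ag}, stating only that it follows ``by the same reasoning'' as Theorems~\ref{struct1} and~\ref{struct1a}, and your write-up is precisely that reasoning with the hypergraph analogues (Claim~\ref{prsomeg}, Lemmas~\ref{prtwog}, \ref{prnotg}, \ref{bigkdense}, Corollary~\ref{smallmus2g}) substituted throughout. Your observation that ``minimal nonempty $k$-core'' in the statement must be read as ``minimal $k$-dense subhypergraph'' is also correct, since the $k$-core is by definition the unique maximal $k$-dense subhypergraph and the literal reading would be vacuous; this is the true analogue of the MFF in Theorem~\ref{struct1a}.
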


\begin{theorem}\label{struct2g}
\assumeb
For every $k$-dense hypergraph $H$ there is a sequence of polynomials
$p_1^{(H)},p_2^{(H)},\ldots$ with rational coefficients such that, for any
$\smax$,
\begin{align}\label{exact1g}
\Pr(G \text{ contains a copy of }H)
 &= \sum_{s=1}^{\smax} p_s^{(H)}(\a) n^{-s} + O(n^{-\smax-1}).
\intertext{%
Furthermore, there is a sequence of polynomials $p_1,p_2,\ldots$ with rational
coefficients such that, for any $\smax$ and any $\a<\astar$,
}
\label{exact2g}
\Pr(G \text{ is non-$k$-colorable})
&= \sum_{s=1}^{\smax} p_s(\a) n^{-s} + O(n^{-\smax-1}) ,
\intertext{%
and similarly a sequence $p'_1,p'_2,\ldots$ such that
}
\label{exact2pg} \tag{\ref{exact2g}$'$}
\Pr(G \text{ has a nonempty $k$-core})
 &= \sum_{s=1}^{\smax} p'_s(\a) n^{-s} + O(n^{-\smax-1}) .
\end{align}
\end{theorem}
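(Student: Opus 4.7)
The plan is to follow the proof of Theorem~\ref{struct2} essentially verbatim, with each SAT-specific ingredient replaced by its hypergraph analogue. First I would observe that the derivation of \eqref{EXG}, kept in exact form, gives
\[
 \E X_H = \ccg(H)\,\frac{\falling{n}{|H|}}{n^{|H|}}\,\a^{e(H)}\,n^{-\ex(H)}
        = \a^{e(H)}\,q_H(1/n),
\]
where $q_H$ is an explicit polynomial with rational coefficients (the falling-factorial ratio $\falling{n}{|H|}/n^{|H|}$ expands as a polynomial in $1/n$), so that $\E X_H$ admits, for any fixed $\smax$, an expansion of the form $\sum_s c_s(\a)\,n^{-s} + O(n^{-\smax-1})$. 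The $k$th factorial moment of $X_H$ is a sum over configurations $H'$ obtained as unions of $k$ distinct ordered copies of $H$, and each summand $\E X_{H'}$ again has this shape. By Proposition~\ref{uniong} applied $k-1$ times, any such $H'$ satisfies $\ex(H') \ge \ex(H) + (k-1)$, so once $k$ exceeds a constant $K = K(\smax)$, every contribution is $O(n^{-\smax-1})$. Applying the Bonferroni (alternating) inequalities to $\Pr(X_H \ge 1)$, truncating the inclusion-exclusion after $K$ terms, and regrouping by powers of $1/n$ yields \eqref{exact1g}.

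For \eqref{exact2g}, I would let $\mathcal{H}$ be the (finite, by Lemma~\ref{fnumbersg}) collection of isomorphism classes of \mufg subhypergraphs with excess at most $\smax$, let $X$ count the copies in $G$ of members of $\mathcal{H}$, and apply the same factorial-moment/Bonferroni argument to obtain an expansion of the desired form for $\Pr(X \ge 1)$. The event that $G$ is non-$k$-colorable differs from $\{X \ge 1\}$ only when $G$ contains a \mufg subhypergraph of excess greater than $\smax$ --- equivalently, by Lemma~\ref{fnumbersg}, of order greater than some constant $t_0 = t_0(\smax)$ --- and that latter probability is $O(n^{-\smax-1})$ by Corollary~\ref{smallmus2g} for the regime $t_0 \le |H'| \le \e_0 n$ together with Lemma~\ref{bigkdense} for $|H'| \ge \e_0 n$. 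Since these two bounds hold for any $s$, we obtain an error of $O(n^{-\smax-1})$ as required. The derivation of \eqref{exact2pg} is identical, with minimal $k$-dense subhypergraphs replacing minimal non-$k$-colorable ones.

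The main obstacle, if it can be called that, is this final absorption step: it is precisely what forces the hypothesis $\sup_n \a(n) < \astarg_{k,r}$, because the exponentially small bound of Lemma~\ref{bigkdense} on large $k$-dense subhypergraphs is not available above the $k$-core threshold. Everything else is routine bookkeeping, exactly as in the SAT case; in particular, as noted after Theorem~\ref{struct2}, computing any individual polynomial $p_s^{(H)}$, $p_s$, or $p'_s$ reduces to a finite enumeration of unions of the relevant minimal subhypergraphs with excess at most $s$.
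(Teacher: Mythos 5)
Your proposal follows the paper's own route for Theorem~\ref{struct2g}, which the paper simply cites as ``the analogue of Theorem~\ref{struct2}, by the same reasoning.'' The overall structure --- expand the factorial moments of $X_H$ via \eqref{EXG}, apply the Bonferroni inequalities, truncate after a bounded number of terms, and for \eqref{exact2g}/\eqref{exact2pg} absorb large-order obstructions via Corollary~\ref{smallmus2g} and Lemma~\ref{bigkdense} --- is exactly right and matches the paper.

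Two of your justifications, however, are not quite correct as written. First, the claim that a union $H'$ of $k$ distinct copies of $H$ satisfies $\ex(H')\geq\ex(H)+(k-1)$ ``by Proposition~\ref{uniong} applied $k-1$ times'' is false. Iterating Proposition~\ref{uniong} requires that each new copy not already be contained in the union of the previous ones, and this can fail. Concretely, take $r=2$, $k=3$, $H=K_4$ (excess $2$): the five induced copies of $K_4$ in $K_5$ are pairwise distinct, but their union is $K_5$, which has excess $5<2+4=6$. What is actually needed --- and what the paper asserts without spelling out --- is weaker: for $K=K(\smax)$ large enough, every union of $K$ distinct copies has excess $>\smax$. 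This holds because such a union is $k$-dense (a union of $k$-dense hypergraphs is $k$-dense), and any $k$-dense hypergraph on at most $t$ vertices contains only a bounded (in $t$) number of distinct copies of $H$; so large $K$ forces large order, and then Lemma~\ref{fnumbersg} forces large excess. Second, the parenthetical ``of excess greater than $\smax$ --- equivalently, by Lemma~\ref{fnumbersg}, of order greater than $t_0$'' asserts an equivalence where the lemma only gives one implication (order $>t_0$ implies excess $>\smax$). A \mufg of order $\le t_0$ can still have excess $>\smax$; such configurations are finitely many and each contributes $O(n^{-\smax-1})$ by Claim~\ref{prsomeg}, so the conclusion stands, but they need to be disposed of separately rather than folded into the ``order $>t_0$'' case. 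With these two patches, the argument is complete and identical in method to the paper's.
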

}

\section{Conclusion}

\subsection{Examples}
For graphs, i.e., hypergraphs with $r=2$,
any $k$-dense graph on $n$ vertices has 
$n \geq k+1$ (each degree is at least~$k$)
and at least $kn/2$ edges,
thus has excess at least $(k/2-1)n$;
this is uniquely minimized by $n=k+1$ and the graph $K_{k+1}$,
with excess $(k-2)(k+1)/2$,
$k(k+1)/2$ edges,
and $|\aut K_{k+1}| = (k+1)!$.
Since $K_{k+1}$ is non-$k$-colorable, 
it is also the unique smallest non-$k$-colorable graph.
Thus for $k \geq 3$, 
$\a<\astarg_{k,2}$, and $G\in \GG(n,\a/n)$, 
\begin{align*}
\Pr(G \text{ is not $k$-colorable})
 &= (1+O(1/n)) \tfrac1{(k+1)!} \: \a^{k(k+1)/2} n^{-(k-2)(k+1)/2} 
 \text{, and}
\\
\Pr(G \text{ has a nonempty $k$-core})
 &= (1+O(1/n)) \tfrac1{(k+1)!} \: \a^{k(k+1)/2} n^{-(k-2)(k+1)/2} 
 .
\end{align*}
Furthermore, if $G$ 
has nonempty $k$-core
then with probability $1+O(1/n)$ its $k$-core is a single copy of $K_{k+1}$;
the same conclusion follows if $G$ is not $k$-colorable.

For random $r$-SAT formulae, 
any full formula on $t$ variables 
has excess at least
$(r-2)t/r$,
and this is minimized uniquely by $t=r$ 
and the formula $F_L$ consisting of the 2 clauses
$(X_1,\ldots,X_r)$ and $(\bar X_1, \ldots, \bar X_r)$,
with excess $r-2$
and $2\cdot r!$ automorphisms.
Thus, for $r \geq 3$ and $F\in \nprsat$,
\begin{align}
\Pr(\text{the pure literal rule fails to satisfy } F)
 = (1+O(1/n)) \tfrac1{2\cdot r!} \a^{2} n^{-(r-2)} .
\end{align}
Furthermore, if the pure literal rule fails to satisfy $F$ 
then with probability $1+O(1/n)$ its pure literal core 
is a single copy of $F_L$, which is satisfiable, in contrast to the graph case, 
where we have seen that the $k$-core is almost surely the non-$k$-colorable graph $K_{k+1}$.

\subsection{2-SAT and 2-CSP} 
For random formulas, we have assumed throughout that $r \geq 3$,
because this is needed for Lemma~\ref{fnumbers}.
Also, we leave unresolved what happens between the pure literal 
and satisfiability thresholds.
However, much is already known about random 2-SAT,
and in this case the thresholds are equal, both having $\a=1/2$.
\chvatal and Reed \cite{CR92} show that a 2-SAT formula is
unsatisfiable iff it contains a ``bicycle'',
and it is straightforward to compute the
likelihoods of bicycles of various sizes.
Our earlier paper \cite{linear} exploited the typically small size 
of the 2-core of a random graph $G\in \GG(n,\a/n)$ with $\a<1$
(a threshold above which the core jumps to linear size)
to give an algorithm running in expected time $O(n)$
for ``random'' instances of any Max 2-CSP below this threshold;
the class of optimization problems Max 2-CSP
includes Max 2-Sat and Max Cut.

\subsection{Very sparse instances}
For very sparse instances ($\alpha\to0$ very quickly), our results need a
little modification, as the preference order for small subinstances must be
changed.  For instance if $p=n^{-\log n}$ then FFs will appear primarily in order of the number of clauses and only secondarily in terms of number of variables (rather than in terms of excess).  

\subsection{Structural results} 
Our results on small subformulae hold for any constant density.
However, above some threshold large subformulae appear.
Our structure theory for random unsatisfiable formulas 
applies below the pure literal threshold,
because we know there are no large full subformulas in this range.
 From the other side,
we know by Chv\'atal-Szemer\'edi \cite{CS88} (or from our analysis) 
that large unsatisfiable subformulae
(therefore large full formulae) appear above the satisfiability threshold.
(At a density $\a=\Theta(1)$ any constant above the satisfiability threshold,
an instance is \aas unsatisfiable~\cite{Friedgut}, 
but our results for subformulae of small and medium size
apply for any $\a=\Theta(1)$, 
so full [and potentially unsatisfiable] subformulae of up to small linear size
occur with small probability, 
thus the obstruction to satisfiability must \whp
be a large minimal unsatisfiable subformula.)
It would be most interesting to know what happens for formulas 
between the pure literal and satisfiability thresholds.

Specifically,
are large minimal unsatisfiable subformulae unlikely 
between the two thresholds,
as are large full subformulae below the pure literal threshold?
Concretely, let
$c_r(n) n^{-(r-1)}$ be a threshold function for $r$-SAT;
recall that $c_r(n)$ is believed 
but not known to converge to a constant.

\begin{question} \label{Q1}
Let $r\geq 3$, $\e>0$, and $p=\a n^{-(r-1)}$,
with $\a(n)\le(1-\e)c_r(n)$ for all $n$.
Let $q(n)$ be the probability that a random formula
$F \in \nprsat$
contains a minimal unsatisfiable subformula on at least $\e n$ vertices.
Is $q(n)=n^{-\omega(1)}$?
\end{question}

A positive answer 
would immediately translate into a proof of a structural theorem.

\subsection{Algorithms}
The behavior of algorithms up to the satisfiability threshold is unclear.
However, it is easy to give algorithms for sufficiently sparse instances.
For instance:

\begin{theorem} \label{thm:sparsepoly}
For all $r$, 
for all sufficiently small 
$\a$ 
there is an expected polynomial-time algorithm
to decide the satisfiability 
of a random formula $F \in \nprsat$,
outputting an assignment satisfying as many clauses as possible
and (if $F$ is unsatisfiable) a minimal unsatisfiable subformula.
\end{theorem}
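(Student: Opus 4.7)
The plan is to preprocess $F$ using the pure literal rule and then brute-force whatever remains. Run the PL rule exhaustively (setting each pure literal True and discarding the clauses it satisfies), leaving the \emph{pure literal core} $F^*$ of order $m := |V(F^*)|$ together with a partial assignment that satisfies every clause containing an eliminated variable. Then enumerate all $2^m$ assignments of $F^*$, recording whether $F^*$ is satisfiable and a maximum-satisfying assignment. If $F^*$ is unsatisfiable, extract a MUF by the standard loop that iterates over the clauses of $F^*$, tentatively removing each and re-inserting it if the remainder becomes satisfiable; this costs $O(e(F^*))$ further unsatisfiability tests, each of cost $2^m$. Combine the answer for $F^*$ with the PL-derived partial assignment to produce the final output for $F$. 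Correctness is immediate: PL elimination preserves both satisfiability and the maximum number of simultaneously satisfiable clauses, and every full subformula---hence every MUF---of $F$ is contained in $F^*$, so a MUF of $F^*$ is also a MUF of $F$.

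The preprocessing runs in time $O(n + e(F))$ and the brute-force stage in time $\mathrm{poly}(n) \cdot 2^m$, so
\[
E[\text{time}] \;\le\; \mathrm{poly}(n)\,\Big(1 + \sum_{m \ge 1} 2^m \cdot \Pr(|V(F^*)| = m)\Big).
\]
Since $|V(F^*)| = m > 0$ implies that $F^*$ is itself a full subformula of $F$ of order $m$, we have $\Pr(|V(F^*)| = m) \le \Pr(F \text{ has a full subformula of order } m)$. Restricting to $\a \le 2^{-(r-1)}$ ensures that the hypothesis $t \le n/(2\a^{1/(r-1)})$ of Lemma~\ref{smallmus} is satisfied for every $t \le n$, and then
\[
2^m \cdot \Pr(F \text{ has a full subformula of order } m) \;\le\; \Big(2 \cdot 4^{(r-1)/r}\, e\, \a^{2/r}\, (m/n)^{1-2/r}\Big)^m ,
\]
whose base is maximized over $m \in [1,n]$ at $m=n$, with value $q := 2 \cdot 4^{(r-1)/r}\, e\, \a^{2/r}$. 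Taking $\a$ small enough makes $q \le 1/e$, whence $\sum_{m \ge 1} q^m = O(1)$ and the overall expected time is $\mathrm{poly}(n)$.

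The step I expect to be the main obstacle is not really an obstacle but rather the reason the theorem asks for $\a$ \emph{sufficiently small} rather than just below the pure literal threshold $\astarr$: the first-moment estimate of Lemma~\ref{smallmus} contributes a factor $\a^{2/r}$ per variable of a hypothetical subformula, and only when $\a$ is small enough does this defeat the $2^m$ brute-force cost uniformly in $m$. Notably, the deviation bound of Lemma~\ref{bigmus} plays no role here, because the single first-moment estimate already dominates $2^m$ all the way up to $m = n$. Pushing the algorithm up to $\astarr$ (let alone to the satisfiability threshold) would require a smarter way of attacking the core when it is of linear size, and appears to need genuinely new ideas.
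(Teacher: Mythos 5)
Your proof is correct, and the overall strategy (pure--literal preprocessing followed by exhaustive search on the core, with the expected running time controlled via the first-moment bound of Lemma~\ref{smallmus}) is the same as the paper's. The one genuine difference is in how the MUF is extracted. The paper enumerates all $2^s$ subformulae of the core (where $s=e(F^*)$) and tests each against all $2^t$ assignments, which makes its total cost $\Ostar(2^{t+s})$ and requires the more involved double sum $\sum_{t\ge1}\sum_{s\ge2t/r}2^t2^s\binom nt\binom{2^r\binom tr}{s}p^s\le 1$. You instead use the standard deletion filter, which needs only $O(e(F^*))=\mathrm{poly}(n)$ unsatisfiability tests on the $m$-variable core, so the MUF extraction has the same $\mathrm{poly}(n)\cdot 2^m$ cost as the satisfiability test itself, and a single application of Lemma~\ref{smallmus} suffices. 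This is cleaner, and your explicit restriction $\a\le 2^{-(r-1)}$ so that the hypothesis $t\le n/(2\a^{1/(r-1)})$ of Lemma~\ref{smallmus} holds for all $t\le n$ is a detail the paper glosses over. The only thing you lose relative to the paper's algorithm is the ability to list \emph{all} MUFs, but the theorem only asks for one. Your closing remark about $\astarr$ is also accurate: the paper likewise obtains the result only for sufficiently small $\a$ and never claims to reach the pure literal threshold.
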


\begin{proof} This follows from \eqref{goft},
for some $\a$ smaller (likely much smaller) 
than the pure literal threshold $\astar$.
\conf{%
Details are in the full version, but
the algorithms consist of running the pure literal rule
to obtain a core with $t$ variables and $s$ clauses, 
then trying all $2^t$ variable assignments
(sufficient to test satisfiability and obtain a maximum assignment)
or all $2^t 2^s$ variable assignments and subformulae of the core
(to find a minimal unsatisfying subformula).
The expected running times are easily calculated from \eqref{goft}
for the first algorithm,
and from
$\sum_{t\ge1}\sum_{s\ge2t/r}2^t2^s\binom nt\binom{2^r \binom tr}{s}p^s$
for the second.
}

\jour{%
We first apply the pure literal rule, 
taking time $\Ostar(1)$ 
(a notation that hides factors polynomial in the input parameters)
and leaving a full subformula on $t$ variables
(if $t=0$, $F$ is satisfied and we are done).
If there are $t$ remaining variables,
we now try all $2^t$ possible assignments,
taking time $\Ostar(2^t)$. 
If $\a$ is sufficiently small,
then \eqref{goft} is at most $4^{-t}$ for all $t\ge1$,
and the expected running time is at most
$\sum_{t\ge1} \Ostar(1) 2^t 4^{-t} = \Ostar(1)$.

To produce a minimal unsatisfiable subformula, 
or list all such subformulas,
again we apply pure literal until we are left with 
a full subformula with $t$ variables and $s$ clauses.
Note that there are at most $\binom nt\binom{2^r\binom tr}{s}$ such formulae, 
each of which is present with probability at most $p^s$,
with $s\ge 2t/r$.
We now look at all $2^s$ subformulae,
and for each we check all $2^t$ assignments of our remaining variables (we can easily order the subformulae so that we can search for a minimal unsatisfiable subformula).
This takes expected time at most
\conf{%
$ 
\sum_{t\ge1}\sum_{s\ge2t/r}2^t2^s\binom nt\binom{2^r \binom tr}{s}p^s
\le1
$,
provided $\a$ is small enough
(the calculation is in the paper's full version).
}
\jour{%
\begin{align*}
\sum_{t\ge1}\sum_{s\ge2t/r}2^t2^s\binom nt\binom{2^r \binom tr}{s}p^s
&\le\sum_{t\ge1}\sum_{s\ge2t/r}\left(\frac{2en}{t}\right)^t\left(\frac{2^{r+1}et^r}{s}\right)^s\left(\frac{\a}{n^{r-1}}\right)^s\\
&\le\sum_{t\ge1}\sum_{s\ge2t/r}(2e)^{(r+1)s+t}\a^s\left(\frac nt\right)^t\left(\frac{t^r}{2t/r}\right)^s\left(\frac{1}{n^{r-1}}\right)^s\\
&\le\sum_{t\ge1}\sum_{s\ge2t/r}(2er)^{2rs}\a^s\left(\frac tn\right)^{(r-1)s-t}\\
&\le\sum_{t\ge1}\sum_{s\ge2t/r}(2er)^{2rs}\a^s\\
&\le\sum_{t\ge1}2^{-t}\\
&\le1,
\end{align*}
}
provided $\a$ is small enough.
Since the initial application of pure literal takes time $\Ostar(1)$ we are done.
}
\end{proof}

If the structural results extend up to the satisfiability threshold,
then \emph{most} unsatisfiable 
instances in the satisfiable regime 
have a small witness, and so can be identified quickly.
This would affirmatively answer the following question.

\begin{question} \label{Q2}
Suppose $\e>0$ and $\a = \a(n) \le (1-\e)c_{r}(n)$ for all $n$.
Is there a polynomial-time algorithm that,
\whp, proves unsatisfiability 
for a random unsatisfiable formula $F \in \FFra$?
\end{question}

More ambitiously,
we could hope for algorithms that succeed \emph{always},
and run in polynomial expected time
(possibly only for smaller densities~$\a$).

\begin{question} \label{Q3}
Suppose $\e>0$ and $\a = \a(n) \le (1-\e)c_{r}(n)$ for all $n$.
Is there an algorithm that,
for a random unsatisfiable formula $F \in \FFra$
proves unsatisfiability in polynomial expected time?
\end{question}

\subsection{Graphs and hypergraphs}
In the graph and hypergraph context, 
we would like to know what happens
between the $k$-core threshold $\astarg_{k,r}$
and a $k$-colorability threshold
$d_{k,r}(n) n^{-(r-1)}$, 
recalling that
$d_{k,r}(n)$ is believed 
but not known to converge to a constant.
Here the essential question is the analogue of Question~\ref{Q1}:
are large minimal non-$k$-colorable subhypergraphs
\conf{unlikely between the two thresholds?}%
\jour{unlikely between the two thresholds
(as large $k$-dense subhypergraphs are below $k$-core threshold)?
}%

A result like Theorem~\ref{thm:sparsepoly} can easily be proved 
for hypergraph coloring 
(see also \cite{CT04} for results on coloring sparse random graphs).
With $r,k\geq 2$, $r+k>4$, $\e>0$, $p=\a n^{-(r-1)}$, and
$\a(n)\le(1-\e)d_{k,r}(n)$,
there are also the obvious analogues of 
Questions \ref{Q2} and \ref{Q3}:
are there are algorithms that are efficient
(almost always, or in expectation)
for $k$-coloring random $r$-uniform hypergraphs 
below the $k$-coloring threshold?

\bibliographystyle{amsalpha}

\begin{thebibliography}{99}

\bibitem{AP04}
D. Achlioptas and Y. Peres, The threshold for random $k$-SAT is $2^k\log 2-O(k)$, 
\emph{Journal of the American Mathematical Society} {\bf 17} (2004), 947--973

\bibitem{B85}
B. Bollob\'as,
Random Graphs,
Academic Press, 1985

\bibitem{BFU93}
A. Broder, A. Frieze and E. Upfal, 
On the satisfiability and maximum satisfiability of random $3$-CNF formulas, 
Proceedings of the Fourth Annual 
ACM-SIAM Symposium on Discrete Algorithms (Austin, TX, 1993), 
322--330, ACM, New York, 1993

\bibitem{CR92}
 V. Chv\'{a}tal and B. Reed, \emph{Mick gets some (the odds are on his
   side)}, 33th Annual Symposium on Foundations of Computer Science (Pittsburgh,
   PA, 1992), IEEE Comput. Soc. Press, Los Alamitos, CA, 1992, pp.~620--627

\bibitem{CS88}
V. Chv\'atal and E. Szemer\'edi, 
Many hard examples for resolution, 
\emph{J. Assoc. Comput. Mach.} {\bf 35} (1988), 
759--768

\bibitem{CT04}
A. Coja-Oghlan and A. Taraz,
Exact and approximative algorithms for coloring $G(n,p)$,
\emph{Random Structures and Algorithms} {\bf 24} (2004), 259--278

\bibitem{AminBetter2010}
A. Coja-Oghlan, \emph{A better algorithm for random k-{SAT}}, {SIAM} Journal on
  Computing \textbf{39} (2010), 2823--2864.

\bibitem{CoKrVi2007}
A. Coja-Oghlan, M. Krivelevich, and D. Vilenchik, \emph{Why almost all
  satisfiable {$k$}-{CNF} formulas are easy}, 2007 {C}onference on {A}nalysis
  of {A}lgorithms, {A}of{A} 07, Discrete Math. Theor. Comput. Sci. Proc., AH,
  Assoc. Discrete Math. Theor. Comput. Sci., Nancy, 2007, pp.~89--101.

\bibitem{EKR}
P. Erd\H os, D.J. Kleitman, and B.L. Rothschild,
Asymptotic enumeration of $K_n$-free graphs, 
In International Colloquium on Combinatorial Theory (Rome, 1973), Vol.~2, 19--27.
Atti dei Convegni Lincei, No.~17, 
\textit{Accad. naz. Lincei, Rome}, 1976.

\bibitem{Friedgut}
E. Friedgut, 
{Necessary and sufficient conditions for sharp thresholds
  of graph properties, and the {$k$-SAT} problem}, 
  \emph{J. Amer. Math. Soc.}
  \textbf{12} (1999), 1017--1054

\bibitem{LMS98}
M. Luby, M. Mitzenmacher and M. Shokrollahi, 
Analysis of random processes via And-Or tree evaluation,
Proceedings of the Ninth Annual ACM-SIAM Symposium on Discrete Algorithms 
(San Francisco, CA, 1998), 364--373, 
ACM, New York, 1998

\bibitem{M89}
C. McDiarmid,
On the method of bounded differences,
\emph{in} Surveys in combinatorics, 1989 (Norwich, 1989),
London Math. Soc. Lecture Note Ser.~{\bf 141}, Cambridge Univ. Press 1989, 
pp.~148--188

\bibitem{M97}
M. Mitzenmacher, 
\emph{Tight thresholds for the pure literal rule}, 
Technical Note 1997-011, Digital Systems Research Center, Palo Alto (1997)

\bibitem{M96}
M. Molloy, 
A Gap Between the Appearance of a k-Core and a (k+1)-Chromatic Graph, 
\emph{Random Structures and Algorithms} {\bf 8} (1996), 
159--160

\bibitem{M01}
M. Molloy, 
Thresholds for {colourability} 
and satisfiability in random graphs and Boolean formulae, 
Surveys in combinatorics, 
2001 (Sussex), 
165--197, 
London Math. Soc. Lecture Note Ser., 288, 
Cambridge Univ. Press, Cambridge, 2001

\bibitem{M04} 
M. Molloy, 
The pure literal rule threshold and cores in random hypergraphs,
Proceedings of the Fifteenth Annual ACM-SIAM Symposium on Discrete Algorithms
(SODA, New Orleans, Louisiana, 2004),
2004,
672--681,
SIAM, 
Philadelphia, PA, USA


\bibitem{M05}
M. Molloy, Cores in random hypergraphs and Boolean formulas, 
\emph{Random Structures Algorithms} {\bf 27} (2005), 124--135

\bibitem{PSS}
H.J. Pr\"omel, T. Schickinger, A. Steger, 
On the structure of clique-free graphs,
\emph{Random Structures Algorithms} {\bf 19} (2001), no. 1, 37--53.

\bibitem{PSW96}
B. Pittel, J. Spencer and N. Wormald, 
Sudden emergence of a giant $k$-core in a random graph,
\emph{J. Combin. Theory Ser. B} {\bf 67} (1996), 111--151

\bibitem{linear}
A.D. Scott and G.B. Sorkin,
Solving sparse random instances of Max Cut and Max $2$-CSP in linear expected time, 
\emph{Combinatorics, Probability and Computing} {\bf 15} (2006), 281--315

\end{thebibliography}

\end{document}